\newtheorem{theorem}{Theorem}
\newtheorem{lemma}{Lemma}
\newenvironment{proof}{\noindent{\it Proof.\,}}{\hfill$\Box$}
\newcommand{\x}{\mathbf{x}}
\newcommand{\bxi}{\boldsymbol{\xi}}
\newcommand{\w}{\mathbf{w}}
\newcommand{\E}{\mathrm{E}}
\newcommand{\Prob}{\mathrm{P}}
\newcommand{\G}{\mathrm{\Gamma}}
\newcommand{\boldeta}{\boldsymbol{\eta}}
\newcommand{\y}{\mathbf{y}}
\title{Preferential placement for community structure formation}
\author{Aleksandr Dorodnykh$^{1}$, Liudmila Ostroumova Prokhorenkova$^{1,2}$, \\ Egor Samosvat$^{1,2}$}
\date{$^1$Moscow Institute of Physics and Technology, Moscow, Russia\\
$^2$Yandex, Moscow, Russia}
\begin{document}


\maketitle

\begin{abstract}
Various models have been recently proposed to reflect and predict different properties of complex networks. However, the community structure, which is one of the most important properties, is not well studied and modeled. In this paper, we suggest a principle called ``preferential placement'', which allows to model a realistic community structure. We provide an extensive empirical analysis of the obtained structure as well as some theoretical results.
\end{abstract}

\section{Introduction}

The evolution of complex networks attracted a lot of attention in recent years. 
Empirical studies of different real-world networks have shown that such structures have some typical properties: small diameter, power-law degree distribution, clustering structure, and others~\cite{boccaletti2006complex,costa2007characterization,newman2003structure}. 
Therefore, numerous random graph models have been proposed to reflect and predict such quantitative and topological aspects of growing real-world networks~\cite{boccaletti2006complex,bollobas2003mathematical,costa2007characterization,ostroumova2016recency,raigorodskii2015small}. 

The most extensively studied property of complex networks is their vertex degree distribution. For the majority of studied real-world networks, the portion of vertices of degree $d$ was observed to decrease as $d^{-\gamma}$, usually with $2 < \gamma < 3$~\cite{barabasi1999emergence,faloutsos1999power,newman2005power}. Such networks are often called scale-free. The most well-known approach to modeling scale-free networks is called \textit{preferential attachment}. The main idea of this approach is that new vertices emerging in a graph connect to some already existing vertices chosen with probabilities proportional to their degrees. Preferential attachment is a natural process allowing to obtain a graph with a power-law degree distribution, and many random graph models are based on this idea, see,~e.g., \cite{bollobas2001degree,buckley2004popularity,holme2002growing,krot2017local,zhou2005maximal}.

Another important characteristic of complex networks is their community (or clustering) structure, i.e., the presence of densely interconnected sets of vertices, which are usually called clusters or communities~\cite{fortunato2010community,girvan2002community}. 
Several empirical studies have shown that community structure of different real-world networks has some typical properties. 
In particular, it was observed that the cumulative community size distribution obeys a power law with some parameter $\lambda$.\footnote{Cumulative community size distribution is the function defined for each $x$ as the probability that a size of a randomly sampled cluster is at least $x$.}
For instance,~\cite{clauset2004finding} reports that $\lambda=1$ for some networks; \cite{arenas2004community} obtains either $\lambda = 0.5$ or $\lambda=1$; \cite{guimera2003self} also observes a power law with $\lambda$ close to 0.5 in some range of cluster sizes; \cite{palla2005uncovering} studies the overlapping communities and shows that $\lambda$ is ranging between 1 and~1.6. 

Community structure is an essential property of complex networks. For example, it highly affects the spreading of infectious diseases in social networks~\cite{hufnagel2004forecast,lipsitch2003transmission},
spread of viruses over computer networks~\cite{wang2000computer}, promotion of products via viral
marketing \cite{kempe2003maximizing},
propagation of information~\cite{romero2011differences}, etc.
Therefore, it is crucial to be able to model realistic community structures.

Unfortunately, the widely used preferential attachment model fails to provide desired clustering structure~\cite{bollobas2003mathematical} and only a few random graph models are able to generate realistic clusters.
Probably the most well-known model was suggested in~\cite{lancichinetti2008benchmark} as a benchmark for comparing community detection algorithms. In this model, the distributions of both degrees and community sizes follow power laws with predetermined exponents. 
However, there are two drawbacks of this model. First, it does not explain the power-law distribution of community sizes, these sizes are just sampled from a power-law distribution at the beginning of the process.
Second, a subgraph induced by each community is very similar to the configuration model~\cite{bender1978asymptotic}, which does not allow to model, e.g., hierarchical community structure often observed in real-world networks~\cite{arenas2004community,clauset2004finding}.

A weighted model which naturally generates communities was proposed in~\cite{kumpula2009model}. However, the community structure in this model is not analyzed in detail and only the local clustering coefficient is shown. From the figures presented in~\cite{kumpula2009model} it seems that the community size distribution does not have a heavy tail as it is observed in real-world complex networks.

Another promising model which naturally generates clusters is proposed in~\cite{bagrow2013natural} and it is based on a so-called \textit{clustering attachment}. However, again, the distribution of cluster sizes was not analyzed for this model. Although from the presented figures and simulations\footnote{\url{http://rocs.hu-berlin.de/interactive/ca/index.html}} it seems that this distribution is not heavy tailed, it would be interesting to analyze it theoretically or empirically.

Let us also mention a paper~\cite{pollner2005preferential} which analyzes a community graph, where vertices refer to communities and edges correspond to shared members between the communities.
The authors show that the development of the community graph seems to be driven by preferential attachment. They also introduce a model for the dynamics of overlapping communities. Note that~\cite{pollner2005preferential} only models the membership of vertices and does not model the underlying network.

Finally, note that a simplified way to analyze the clustering structure is to measure the clustering coefficient, i.e., the probability that two neighbors of a vertex are connected. Several models with high clustering coefficient were proposed in the literature. For example, Spatial Preferential Attachment model~\cite{aiello2008spatial} was proven to have a power-law degree distribution, high clustering coefficient and other desirable properties~\cite{iskhakov2018clustering,prokhorenkova2017modularity}. Similarly, hyperbolic random graphs~\cite{krioukov2010hyperbolic} generate a power-law degree distribution, low diameter and a high clustering coefficient. Also, preferential attachment models were modified by introducing special steps of triangle formation~\cite{holme2002growing}. However, a high clustering coefficient does not necessary lead to a realistic clustering structure. The presence of clusters and the distribution of their sizes were not analyzed in the models mentioned above.

In this paper, we propose a process which naturally generates clustering structures. 
Our approach is called \textit{preferential placement} and it is based on the idea that vertices can be embedded in a multidimensional space of latent features. The vertices appear one by one and their positions are defined according to preferential placement: new vertices are more likely to fall into already dense regions.  
We present a detailed description of this process in Section~\ref{sec:model}. 
After $n$ steps we obtain a set of $n$ vertices placed in a multidimensional space. In Section~\ref{sec:empirical} we empirically and theoretically analyze the obtained structure: in particular, we show that the communities are clearly visible and their sizes are distributed according to a power law. Note that after the placement of all vertices is defined, one can easily construct an underlying network, using, e.g., the threshold model \cite{bradonjic2008structure,masuda2005geographical}. We discuss possible models and their properties in Section~\ref{sec:graph}. 

This paper is a journal version of~\cite{dorodnykh2017preferential}. It contains additional theoretical and empirical results for the vertex configuration obtained by preferential placement. We also discuss some new ideas on generating graphs from this configuration and analyze the properties of the obtained random graph models.



\section{Preferential placement}\label{sec:model}

\subsection{Definition}\label{sec:model_def} 

In this section, we describe the proposed approach which we call \textit{preferential placement}. We assume that all vertices are embedded in $\mathbb{R}^d$ for some $d \ge 1$. One can think that coordinates of this space correspond to latent features of vertices. 
Introducing latent features has recently become a popular approach both in predictive and generative models. These models are known by different names such as latent feature models~\cite{menon2010log,miller2009nonparametric},  matrix factorization models~\cite{artikov2016factorization,dunlavy2011temporal,menon2011link}, spatial models~\cite{aiello2008spatial,barthelemy2003crossover,barthelemy2011spatial}, or geographical models~\cite{bradonjic2008structure,masuda2005geographical}. The basic idea behind all these models is that vertices having similar latent features are more likely to be connected by an edge.

Preferential placement is a procedure describing the embedding of vertices in the space $\mathbb{R}^d$. After that, given the coordinates of all vertices, one can construct a graph using one of many well-known approaches (see Section~\ref{sec:graph} for the discussion of possible variants).

Our model is parametrized by a distribution $\Xi$ taking nonnegative values. The proper choice of $\Xi$ is discussed further in this section.

We construct a random configuration of vertices (or points) $S_n = \{\x_1, \ldots, \x_n\}$, where $\x_i = (x_i^1, \ldots, x_i^d)$  denotes the coordinates of the $i$-th vertex $v_i$. Let $S_1 = \{\x_1\}$, $\x_1$ is the origin. Now assume that we have constructed $S_t$ for $t\ge1$, then we obtain $S_{t+1}$ by adding a vertex $v_{t+1}$ with the coordinates $\x_{t+1}$ chosen in the following way:
\begin{itemize} 
\item Choose a vertex $v_{i_{t+1}}$ from $v_1, \ldots, v_t$ uniformly at random.
\item Sample $\xi_{t+1}$ from the distribution $\Xi$.
\item Sample a direction $\mathbf{e}_{t+1}$ from a uniform distribution on a multidimensional sphere $\|\mathbf{e}_{t+1}\| = 1$, where by $\|\cdot\|$ we denote the Euclidean norm in $\mathbb{R}^d$. 
\item Set $\x_{t+1} = \x_{i_{t+1}} + \xi_{t+1} \cdot \mathbf{e}_{t+1}$.                           
\end{itemize} 
We argue in this paper that in order to obtain a realistic clustering structure one should take $\Xi$ to be a heavy tailed distribution. In this case, according to the procedure described above, new vertices will usually appear in the dense regions, close to some previously added vertices; however, due to the heavy tail of $\Xi$, from time to time we get outliers, which originate new clusters.

We call the described above procedure ``preferential placement'' due to its analogy with preferential attachment. Assume that at some step of the algorithm we have several clusters, i.e., groups of vertices located close to each other, and a new vertex appears. Then the probability that this vertex will join a cluster $C$ is roughly proportional to its size, i.e., the number of vertices already belonging to this cluster.
This is the basic intuition which we discuss further in this paper in more details.

\subsection{Tree-based interpretation}\label{sec:model_tree}  

In some cases, it is convenient to think of the definition given above in the following way.
Let us first construct a uniform random recursive tree. 
Recall that a recursive tree of order $n$ is a rooted tree on $n$ vertices labeled $v_1, \ldots, v_n$, with the property that for each $k$, $2 \le k \le n$, the indices of the vertices on the unique path from the root $v_1$ to the vertex $v_k$ form an increasing sequence~\cite{smythe1995survey}. 
Uniform random recursive tree (or URRT) is a tree sampled from the set of recursive trees of order $n$ uniformly at random. An equivalent way to sample a URRT is to start from a graph $T_1$ consisting of a root vertex $v_1$ and at each step $t+1$ add a vertex $v_{t+1}$ connected to a vertex $v_{i_{t+1}}$ chosen from $v_1, \ldots, v_t$ uniformly at random. After $n$ steps we get a URRT $T_n$. Random recursive trees were extensively studied in the literature, see, e.g.,~\cite{devroye1995strong,dobrow1999total,najock1982number,pittel1994note}.

To construct a random configuration of vertices $S_n = \{\x_1, \ldots, \x_n\}$, we first construct a URRT $T_n$ using a recursive procedure described above (we will further call this tree \textit{genealogical}). Then we label all edges of $T_n$: for each edge $(v_{t},v_{i_t})$ we set its label $\w_t$ to be a vector $\xi_t \mathbf{e}_{t}$, where $\xi_t$ is sampled from the distribution $\Xi$ and $\mathbf{e}_{t}$ is, as before, a random vector of length 1 in $\mathbb{R}^d$. Finally, the value $\x_k$ for each vertex $v_k$ is obtained by summing all labels on the unique path from $v_1$ to $v_k$, i.e., $\x_k = \w_{j_1}+ \ldots + \w_{j_l}$ such that for all $1 \le t \le l$ we have $i_{j_t} = j_{t-1}$, $j_l = k$, $j_0 = 1$. It is easy to see that this definition is equivalent to the one given in Section~\ref{sec:model_def}. 

Note that each $\x_k = \w_{j_1}+ \ldots + \w_{j_l}$ is obtained by a random walk with lengths of jumps distributed according to $\Xi$, which we further assume to be a power law. Such random walks are known in the literature as L{\'e}vy flights~\cite{chechkin2008introduction}. In other words, preferential placement is essentially a combination of URRT and L{\'e}vy flights.





\section{Analysis of preferential placement}\label{sec:empirical}

\subsection{Experimental setup}\label{sec:setup}

In this section, we analyze structures obtained using the preferential placement procedure described above. We take $\Xi$ to be a slightly modified Pareto distribution with the density function $f_\beta(x) = \frac \beta {(x + 1)^{\beta+1}}, x\geq 0$ for fixed $\beta > 0$. 
This distribution is supported on the interval $[0,+\infty)$ and is asymptotically equivalent to the Pareto distribution.

In all the experiments we take $d=2$ since the obtained structures are easy to visualize. However, we also tried other values of $d\ge 1$ and obtained results similar to shown, e.g., on Figures~\ref{fig:size_distr_all},~\ref{fig:visualization} and~\ref{fig:degree_distr}.
Also, if not specified otherwise, we generated structures with the number of points $n = 10^5$.
In Section~\ref{sec:parameters}, we analyze the effect of $d$, $n$ and $\beta$ on the distribution of cluster sizes.


\begin{figure*}[t]
        \centering
        \begin{subfigure}[b]{0.49\textwidth}
            \centering       \includegraphics[width=\textwidth]{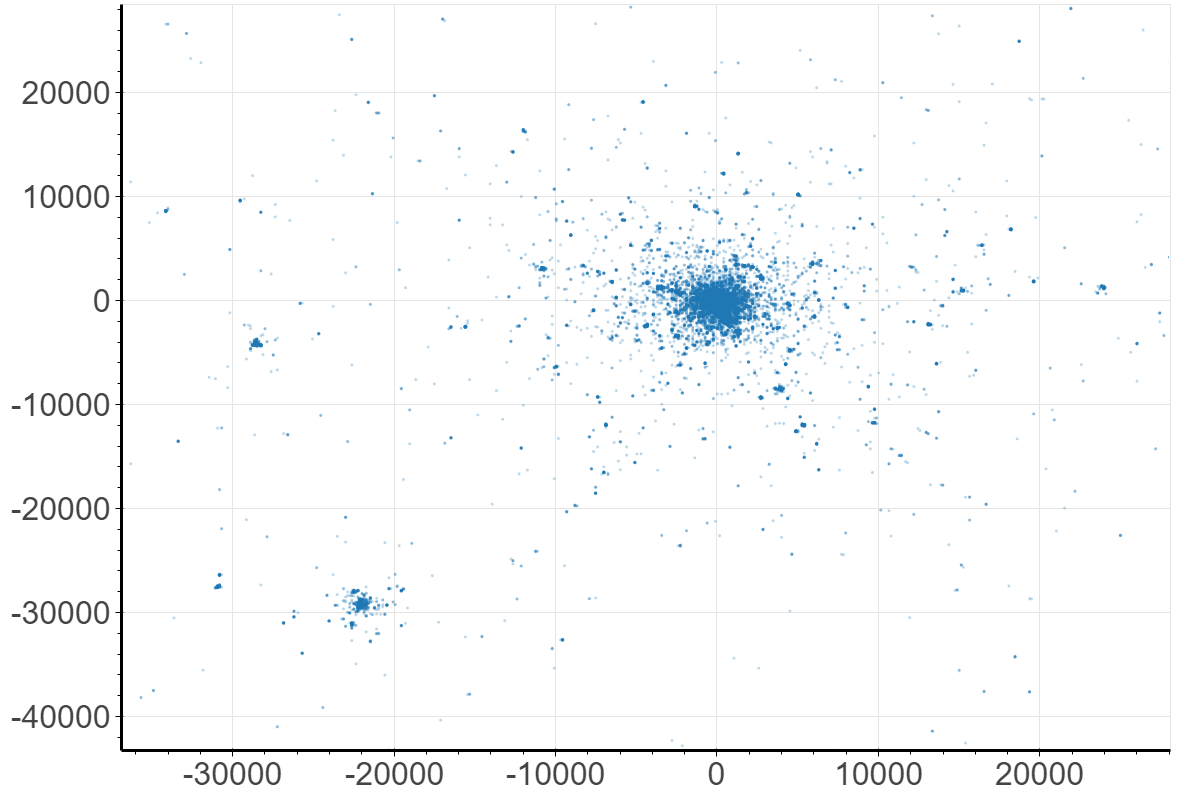}
 \caption{$\beta=0.5$}\label{fig:beta=0.5}
        \end{subfigure}
        \begin{subfigure}[b]{0.49\textwidth}
            \centering            \includegraphics[width=\textwidth]{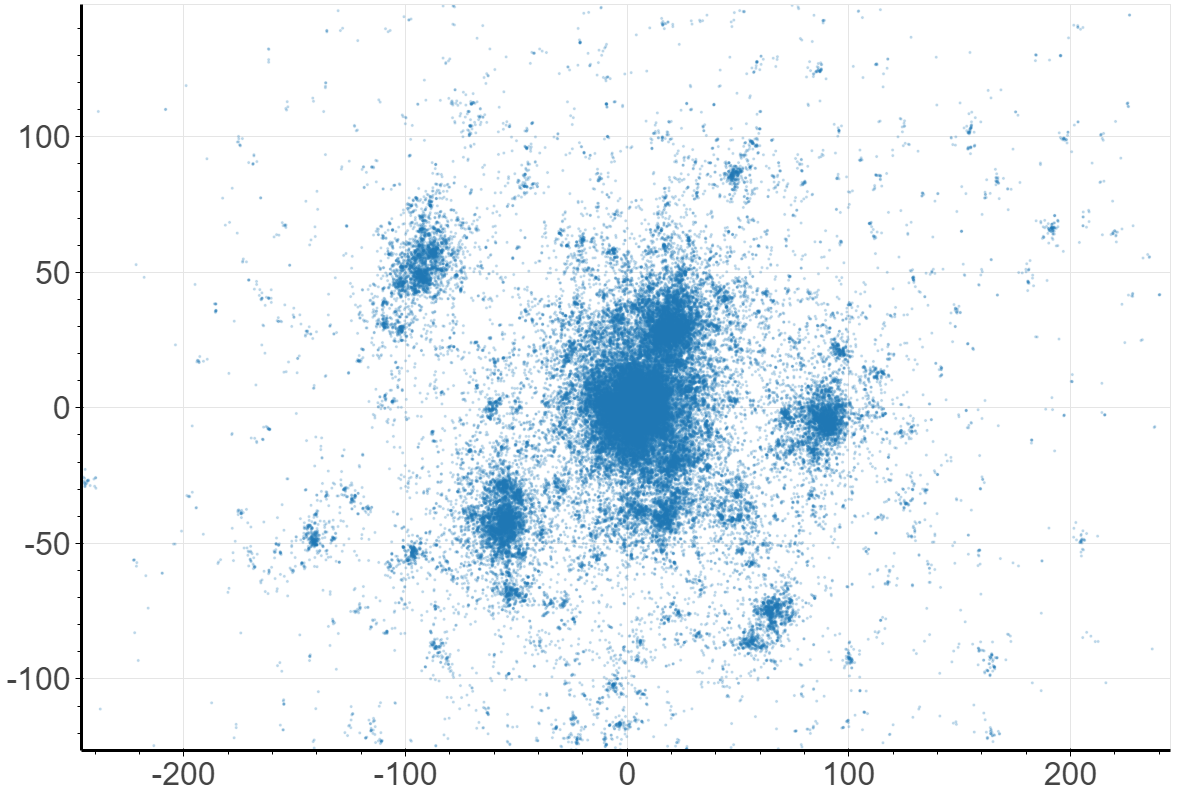}
            \caption{$\beta=1$}
        \end{subfigure}
        \begin{subfigure}[b]{0.49\textwidth}
            \centering
 \includegraphics[width=\textwidth]{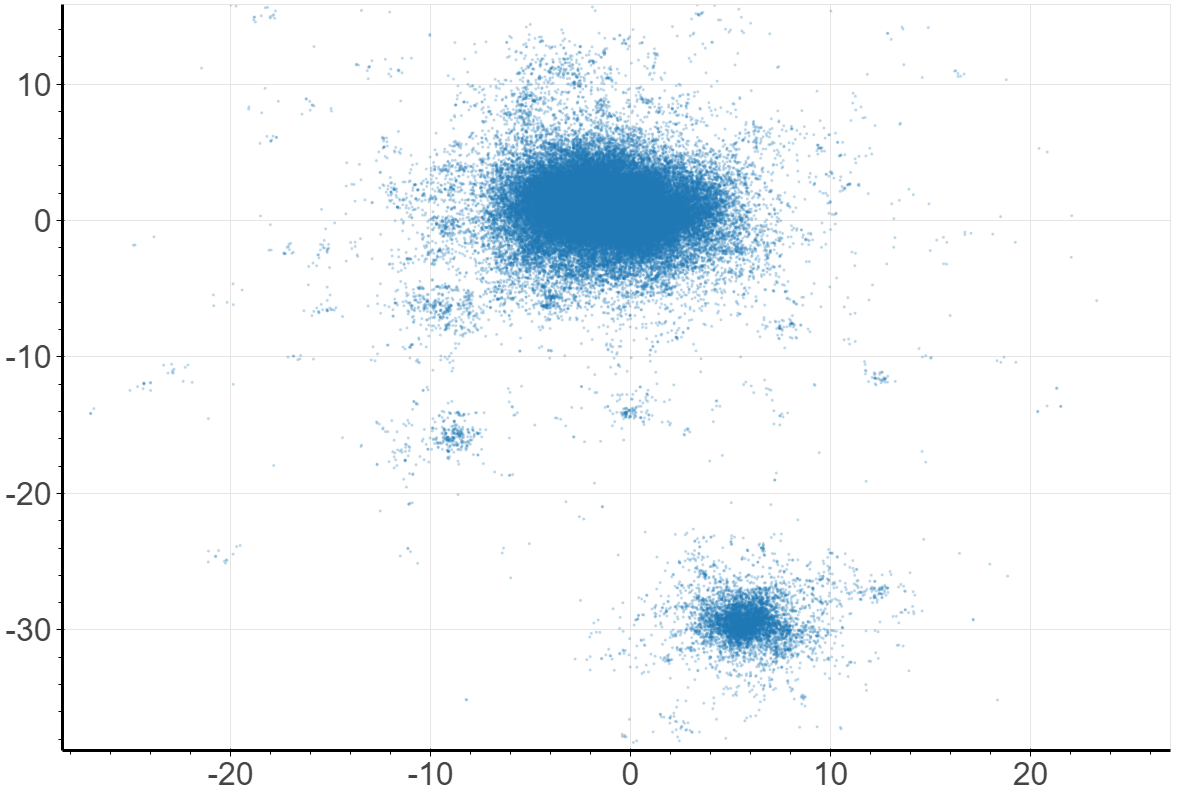}            \caption{$\beta=2.5$}
        \end{subfigure}
        \begin{subfigure}[b]{0.49\textwidth}
            \centering
 \includegraphics[width=\textwidth]{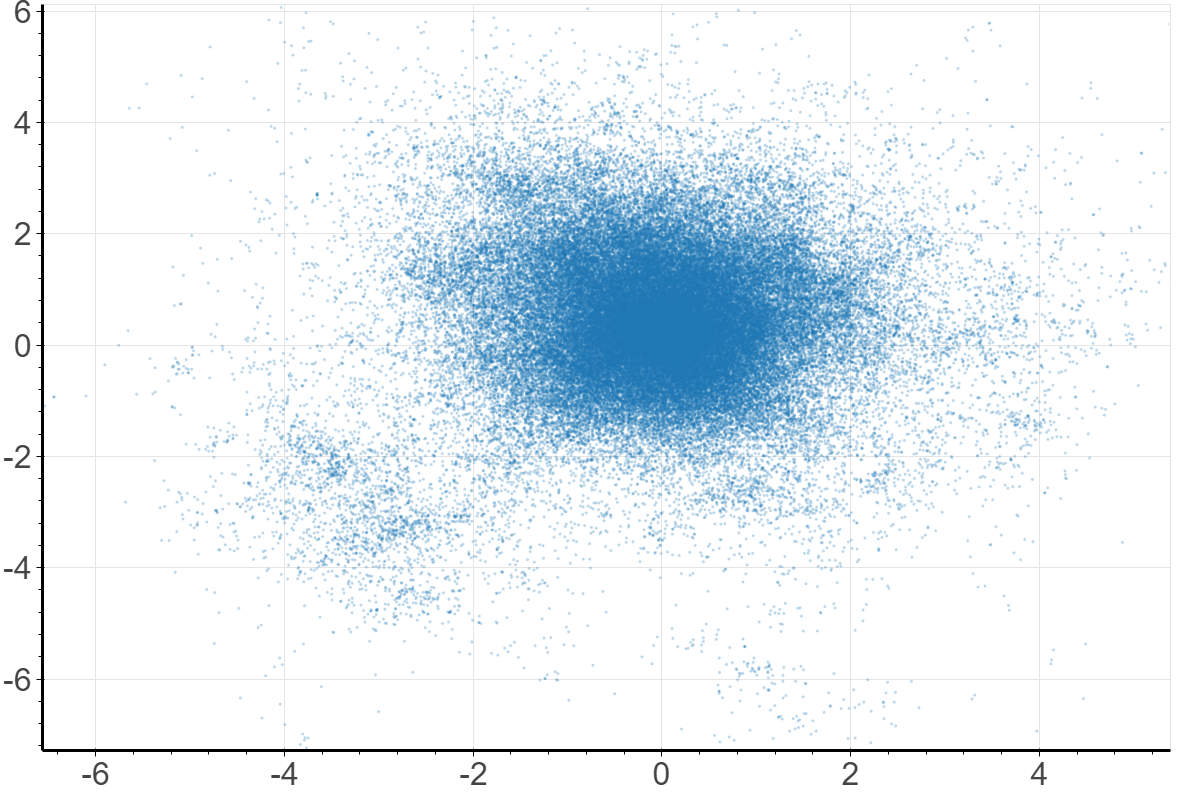}            \caption{$\beta=4$}\label{fig:beta=4}
        \end{subfigure}
        \caption{Clustering structure depending on $\Xi$.}
        \label{fig:beta}
\end{figure*}

\subsection{Clustering structure depending on $\Xi$}

First, let us visualize the structures obtained by our algorithm. We tried several values of $\beta$, $\beta \in \{0.5, 1, 1.5, 2.5, 4\}$. 
The results are presented on Figures~\ref{fig:beta} and~\ref{fig:beta_scale}. 
The value $\beta = 0.5$ produces the heaviest tail, in this case the distribution $\Xi$ does not have a finite expectation. Although some clusters are clearly visible in this case (Figure~\ref{fig:beta=0.5}), they are located far apart from each other
and there are too many single outliers laying far away from other points, which seems to be not very realistic.
Note that for too large $\beta$, e.g., for $\beta = 4$, the variance is too low and we obtain only one giant cluster with minor fluctuations, as presented on Figure~\ref{fig:beta=4}.
Further in this paper we discuss the case $\beta = 1.5$ presented on Figure~\ref{fig:beta_scale}. In this case  $\Xi$ has a finite expectation but an infinite variance.

\begin{figure*}[t]
        \centering
        \begin{subfigure}[b]{0.49\textwidth}
            \centering
            \includegraphics[width=\textwidth]{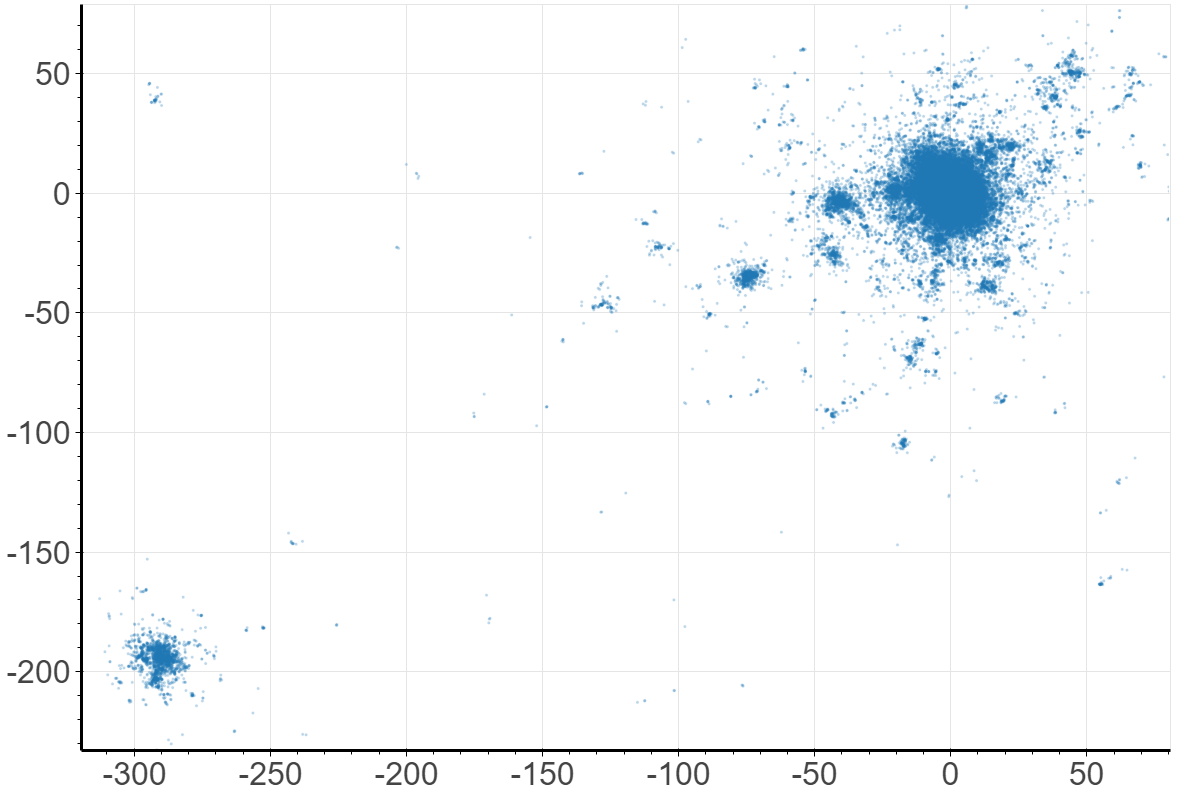}
            \label{fig:beta_scale_1}
        \end{subfigure}
        \begin{subfigure}[b]{0.49\textwidth}
            \centering
            \includegraphics[width=\textwidth]{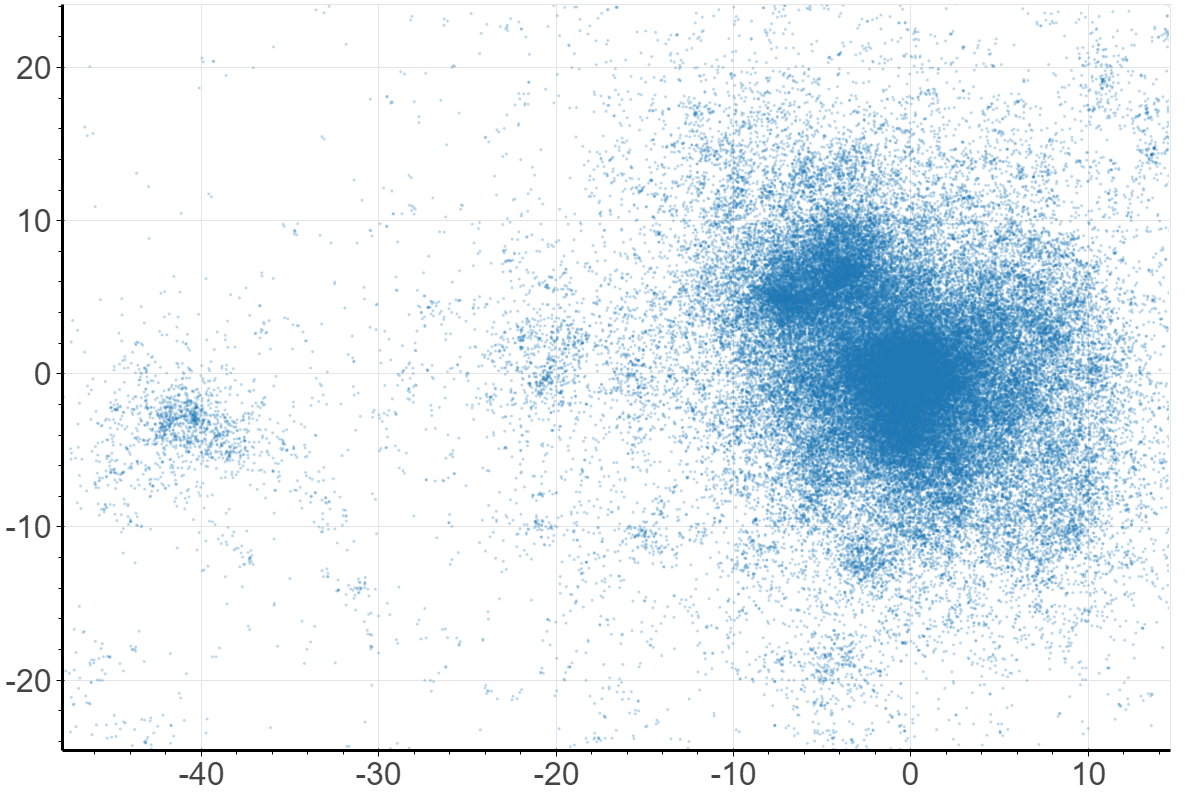}
            \label{fig:beta_scale_2}
        \end{subfigure}
        \caption{$\beta = 1.5$, different scales.}
        \label{fig:beta_scale}
\end{figure*}

Another interesting observation is a hierarchical clustering structure produced by our algorithm. To illustrate this, we take the figure obtained for $\beta = 1.5$ and zoom it to see more details. Figure~\ref{fig:beta_scale} shows that the largest cluster further consists of several sub-clusters.

In order to further analyze this self-similarity property, we empirically estimated a fractal dimension of the obtained structures~\cite{kenneth2003fractal}. 
A fractal dimension is an index characterizing how a fractal pattern changes with the scale at which it is measured.
Unlike topological dimensions, the fractal dimension can take non-integer values. For example, a curve with a fractal dimension close to 1 behaves quite like an ordinary line, but a curve with a fractal dimension 1.9 winds convolutely through space nearly like a surface.
We used a box-counting algorithm\footnote{\url{https://en.wikipedia.org/wiki/Box_counting}} which computes the number of boxes $N(\varepsilon)$ with the edge size equal to $\varepsilon$ needed to cover the structure. Then the fractal dimension $d_f$ is defined using the following relation: $N(\varepsilon) \propto \varepsilon^{-d_f}$. The results are presented on Figure~\ref{fig:fd} (crosses represent the empirically observed pairs $(\varepsilon,N(\varepsilon))$, while lines correspond to the approximations $N(\varepsilon) \propto \varepsilon^{-d_f}$). One can see that the fractal dimension $d_f$ increases from~$0.4$~to~$1.4$ while $\beta$ increases from~$0.5$~to~$4$.
Note that the fractal dimension of L{\'e}vy flights is also known to monotonically increase with the parameter of the power-law distribution~\cite{laskin2000fractional}.

\begin{figure*}[t]
        \centering
        \begin{subfigure}[b]{0.49\textwidth}
            \centering
            \includegraphics[width=\textwidth]{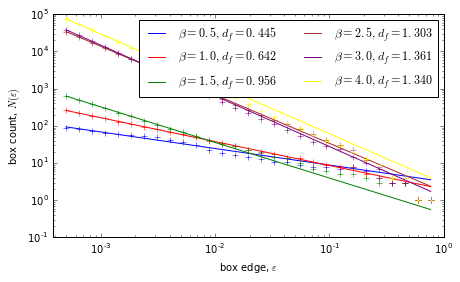}
        \end{subfigure}
        \begin{subfigure}[b]{0.49\textwidth}
            \centering
            \includegraphics[width=\textwidth]{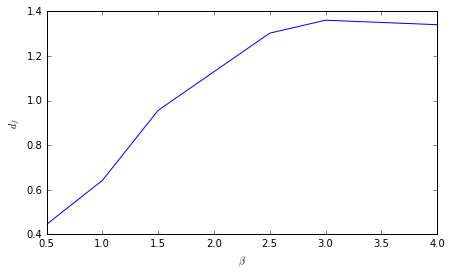}
        \end{subfigure}
        \caption{Fractal dimension $d_f$.}
        \label{fig:fd}
\end{figure*}

\subsection{The distribution of cluster sizes}

In this section, we analyze the distribution of cluster sizes produced by preferential placement. We present both theoretical and empirical observations.

The main difficulty with the analysis of clustering structures is the fact that there are no standard definitions of clusters, both in graphs and metric spaces.
For example, clusters are often defined as a result of some clustering algorithm.\footnote{\textit{Modularity}, introduced in~\cite{newman2004finding}, can be used to define communities in graphs. However, this characteristic has certain drawbacks, as discussed in~\cite{fortunato2007resolution}. Moreover, modularity favors partitions with approximately equal communities, which contradicts the main idea of power-law distribution of community sizes. }
This causes a lot of difficulties for both theoretical and empirical analysis.

\subsubsection{Theoretical analysis}\label{sec:theory}

First, let us discuss why we expect to observe a power-law distribution of cluster sizes in our model.
As we discussed above, due to the absence of a rigorous definition of a cluster, further in this section we are only able to give some insights to the fact that the proposed algorithm is expected to generate power-law distribution of cluster sizes. Namely, we make some strong assumptions and then rigorously prove that the distribution follows a power law.

Let $F_t(s)$ denote the number of clusters of size $s$ at step $t$. In order to analyze $F_t(s)$ we consider its dynamics inductively. 
Assume that after a step $t$ we obtain some clustering structure.
At step $t+1$ we add a vertex $v_{t+1}$ and choose its ``parent'' $v_{i_{t+1}}$ from $v_1, \ldots, v_t$ uniformly at random.
Clearly, the probability to choose a parent from some cluster $C$ with $|C| = s$ is equal to $\frac{s}{t}$. In this case, we call $C$ a parent cluster for $v_{t+1}$.
Now let us make the following assumptions:
\begin{enumerate}
\item All clusters can only grow, they cannot merge or split. 
\item At step $t+1$ a new cluster appears with probability $p(t) = \frac{c}{t^{\alpha}}$, $c>0$, $0\le\alpha \le 1$.\footnote{For simplicity, we choose $c$ such that $p(t) < 1$ for all $t\ge 2$. Although our proof can be extended to cases where $p(t)$ behaves as $\frac{c}{t^\alpha}$ only asymptotically (which would allow to take larger $c$), we do not formally consider this case.}
\item Given that a vertex $t+1$ does not create a new cluster, the probability to join a cluster $C$ with $|C| = s$ is equal to $\frac{s}{t}$.
\end{enumerate}

These assumptions are quite strong and even not very realistic. For instance, it seems reasonable that two clusters can merge if many vertices appear somewhere between them. In Section~\ref{sec:empirical_analysis} we discuss how the violation of the first assumption by the largest clusters affects the observed distribution of cluster sizes. Regarding the second assumption, $p(t)$ can possibly depend on the current configuration $S_{t}$. However, these assumptions allow us to analyze the behavior of $F_t(s)$ formally. Namely, we prove the following theorem.

\begin{theorem}\label{th:cluster_sizes}
Under the assumptions described above the following holds.
\begin{enumerate}
\item If $\alpha=0$ and $0<c<1$, then
$$
\E F_n(s) = \frac{c\, (s-1)!\,\G\left(2+\frac{1}{1-c}\right)}{(2-c)\G\left(s+1+\frac{1}{1-c}\right)}\left(n + O\left(s^\frac{1}{1-c}\right)\right)
\sim \frac{c \,\G\left(2+\frac{1}{1-c}\right)}{(2-c)}\cdot \frac{n}{s^{1+\frac{1}{1-c}}} \,.
$$
\item If $0<\alpha\le 1$, then for any $\epsilon>0$
$$
\E F_n(s) = \frac{c \,(s-1)! \, \G(3-\alpha)}{(2-\alpha)\G(s+2-\alpha)} \left(n^{1-\alpha} + O\left(n^{\max\{0,1-2\alpha\}}s^{1-\alpha+\epsilon}\right)\right) 
\sim \frac{c \, \G(3-\alpha)}{2-\alpha} \cdot \frac{n^{1-\alpha}}{s^{2-\alpha}} \,.
$$
\end{enumerate}
\end{theorem}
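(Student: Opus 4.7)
My plan is to set up a linear recurrence for $\E F_t(s)$ from the one-step dynamics and then lift the formal fixed-point of this recurrence to a rigorous asymptotic by induction on $s$.

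First I would derive the recurrence. Conditionally on the configuration at time $t$, step $t+1$ creates a new cluster of size $1$ with probability $p(t)$, and with probability $(1-p(t))s/t$ a fixed cluster of size $s$ absorbs the new vertex and becomes of size $s+1$. Summing over clusters gives, for $s\ge 2$,
$$
\E F_{t+1}(s) = \Bigl(1 - \tfrac{(1-p(t))\,s}{t}\Bigr)\, \E F_t(s) + \tfrac{(1-p(t))(s-1)}{t}\, \E F_t(s-1),
$$
with the companion equation $\E F_{t+1}(1) = (1-(1-p(t))/t)\, \E F_t(1) + p(t)$. Plugging in the ansatz $\E F_t(s)\approx a_s\,g(t)$ with $g(t)=t$ in case~1 and $g(t)=t^{1-\alpha}$ in case~2, and matching leading orders, yields in both cases the same two-term recursion $(s+\gamma)\,a_s = (s-1)\,a_{s-1}$, with $\gamma=1/(1-c)$ in case~1 and $\gamma=1-\alpha$ in case~2, and base values $a_1=c/(2-c)$ and $a_1=c/(2-\alpha)$. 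Telescoping yields $a_s = a_1 (s-1)!\,\G(2+\gamma)/\G(s+1+\gamma)$, matching the closed form in the statement; Stirling then gives $a_s\sim a_1\,\G(2+\gamma)\, s^{-1-\gamma}$.

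To turn this into a proof I would iterate the recurrence in $t$ to write, for $n\ge s$,
$$
\E F_n(s) = \sum_{t=s}^{n-1} \frac{(1-p(t))(s-1)}{t}\, \E F_t(s-1)\,\prod_{r=t+1}^{n-1}\!\!\Bigl(1 - \tfrac{(1-p(r))\,s}{r}\Bigr),
$$
(the boundary term at $t=s$ involves $\E F_s(s)=O(1)$ and can be absorbed into the error). The inner product equals $(t/n)^{(1-c)s}$ up to a bounded factor in case~1, and $(t/n)^s\exp\!\bigl(O(s\,n^{-\alpha})\bigr)$ in case~2, where the correction comes from $\sum_r p(r)/r\asymp n^{-\alpha}$. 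Substituting the inductive estimate $\E F_t(s-1) = a_{s-1}\,g(t) + E_{s-1}(t)$ and evaluating the resulting sum by comparison with $\int_s^n t^{(1-c)s-1}\,g(t)\,dt / n^{(1-c)s}$ (and the analogous integral in case~2), the leading term reproduces $a_s\,g(n)$ via the telescoping identity above, while the residual has two sources: (i) the Riemann-sum error in the asymptotics of the product, and (ii) the inductive error $E_{s-1}$.

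The main obstacle will be controlling these error terms uniformly in $s$. In case~1 the leading discrepancy comes from the lower end of the sum, where $t$ is not much larger than $s$ and neither the product nor the inductive estimate for $\E F_t(s-1)$ is close to its asymptotic form; a careful truncation of this range produces exactly the $O(s^{1/(1-c)})$ term. In case~2 the factor $n^{\max\{0,1-2\alpha\}}$ reflects the correction $s\int_1^n p(r)/r\,dr\asymp s\,n^{-\alpha}$ in the exponent of the product, multiplied by the leading scale $g(n)=n^{1-\alpha}$, and the $s^\epsilon$ absorbs logarithmic slack from replacing ratios of $\G$-functions by their Stirling equivalents. Because the recursive map on the normalised error $E_s(n)/g(n)$ contracts by the factor $(s-1)/(s+\gamma)<1$, the induction step goes through once the base case $s=1$ is checked directly from the scalar recurrence for $\E F_t(1)$.
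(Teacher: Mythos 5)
Your first stage coincides exactly with the paper's: the same one-step recurrences for $\E F_{t+1}(1)$ and $\E F_{t+1}(s)$, and the same coefficients $a_s = a_1\,(s-1)!\,\G(2+\gamma)/\G(s+1+\gamma)$ with $\gamma = 1/(1-c)$, resp.\ $\gamma = 1-\alpha$. Where you diverge is in how the asymptotics are made rigorous: the paper never unrolls the recurrence in $t$. It runs a guess-and-verify double induction (on $s$, and on $n$ within each $s$), writing $\E F_n(s) = a_s\bigl(g(n) + \theta_{n,s}\bigr)$ and propagating the explicit bounds $|\theta_{n,s}| \le C s^{1/(1-c)}$, resp.\ $|\theta_{n,s}| \le C n^{\max\{0,1-2\alpha\}} s^{1-\alpha+\epsilon}$, through a single step via elementary inequalities such as $(s-1)^{\frac{1}{1-c}}\frac{s(1-c)+1}{t} \le s^{\frac{1}{1-c}}\frac{s(1-c)}{t}$. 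Your variation-of-constants route, by contrast, requires uniform control of the product $\prod_{r=t+1}^{n-1}\bigl(1-\tfrac{(1-p(r))s}{r}\bigr)$ and of Riemann-sum errors, and this is where your plan has genuine gaps.

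Concretely: (i) the claim that the product equals $(t/n)^{(1-c)s}$ ``up to a bounded factor'' is false on a large part of the summation range, since $\log\prod_{r=t+1}^{n-1}\bigl(1-\tfrac{(1-c)s}{r}\bigr) = (1-c)s\log(t/n) + O\bigl(s^2(1/t-1/n)\bigr)$, so boundedness requires $t \gtrsim s^2$; likewise in case 2 the correction exponent is $\asymp s\,t^{-\alpha}$, governed by the \emph{lower} limit of the product, not $\exp(O(s\,n^{-\alpha}))$ as you wrote. (ii) More importantly, your closing argument via the contraction factor $(s-1)/(s+\gamma)<1$ cannot by itself carry the induction: the target error bound \emph{grows} with $s$ (as $s^{1/(1-c)}$, resp.\ $s^{1-\alpha+\epsilon}$) precisely because fresh errors are injected at every step of the $s$-induction, and the product/integral approximations above inject, per step, an error of absolute size $\asymp s\,a_{s-1}$; propagating these through your scheme yields a remainder of order $s^2$ inside the parentheses, which is \emph{weaker} than the stated $O(s^{1/(1-c)})$ whenever $c<1/2$ (i.e.\ $\gamma<2$). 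Thus your route would establish the final asymptotic $\sim$ for fixed $s$ but not the displayed error term. Relatedly, the role of $\epsilon$ in case 2 is misattributed: it is not Stirling slack; with exponent exactly $1-\alpha$ the error recursion is marginal, as $s^{2-\alpha}-(s+1-\alpha)(s-1)^{1-\alpha} = O(s^{-\alpha})$, and the paper inflates the exponent to $1-\alpha+\epsilon$ exactly to create the slack $\epsilon\,s^{1-\alpha+\epsilon}$ that absorbs the $O(t^{-\alpha})+O(t^{1-2\alpha})$ terms. A repair within your framework exists for case 1, where the product is an exact Gamma ratio, $\prod_{r=t+1}^{n-1}\frac{r-(1-c)s}{r} = \frac{\G(n-(1-c)s)\,\G(t+1)}{\G(n)\,\G(t+1-(1-c)s)}$, and the resulting sums telescope exactly; in case 2 no such exact form is available, and the cleanest fix is to switch to the paper's direct induction with explicit $\theta_{n,s}$ bounds.
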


To sum up, if the probability $p(n)$ of creating a new cluster is of order $\frac{1}{n^{\alpha}}$ for $\alpha>0$, then the distribution of cluster sizes follows a power law with parameter $2-\alpha$ growing with $p(n)$ from 1 to 2; if $p(n) = c$, $0<c<1$, then the parameter grows with $c$ from 2 to infinity. Note that for power-law distributions, the parameter of the cumulative distribution $\lambda$ is one less than the parameter of the corresponding probability mass function, i.e., the exponent in the theorem discussed above is $\lambda+1$.
The proof of Theorem~\ref{th:cluster_sizes} is technical and we place it in the Appendix.

 
Let us also explain why we do not consider $p(n)$ decreasing faster than $\frac{c}{n}$.
It is natural to assume that a new cluster appears if a new vertex chooses a parent vertex near the border of some cluster and then $\xi_{t+1}$ and $\mathbf{e}_{t+1}$ are chosen such that $\x_{t+1} = \x_{i_{t+1}} + \xi_{t+1} \cdot \mathbf{e}_{t+1}$ falls quite away from the parent cluster. This probability is roughly proportional to the number of vertices located near the borders of the clusters. For each cluster, at least one vertex has to be near its border, so we get the lower bound $\frac{c}{n}$.

Finally, let us mention that in practice the probability $p(n)$ of creating a new cluster can depend not only on $\Xi$, but also on the definition of clusters. Further in this section we demonstrate that parameters of a clustering algorithm can affect the parameter of the obtained power law.

\begin{figure*}[t]
        \centering
        \begin{subfigure}[b]{0.49\textwidth}
            \centering
            \includegraphics[width=\textwidth]{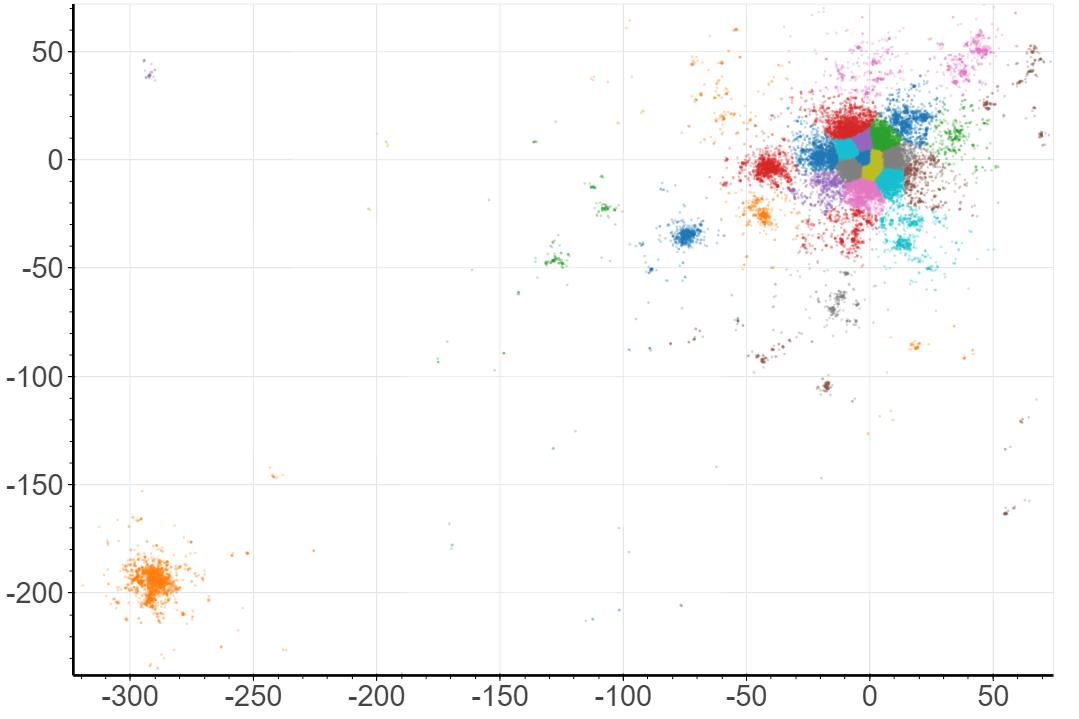}
        \caption{k-means, $k=50$}\label{fig:k-means}
        \end{subfigure}
        \begin{subfigure}[b]{0.49\textwidth}
            \centering
        \includegraphics[width=\textwidth]{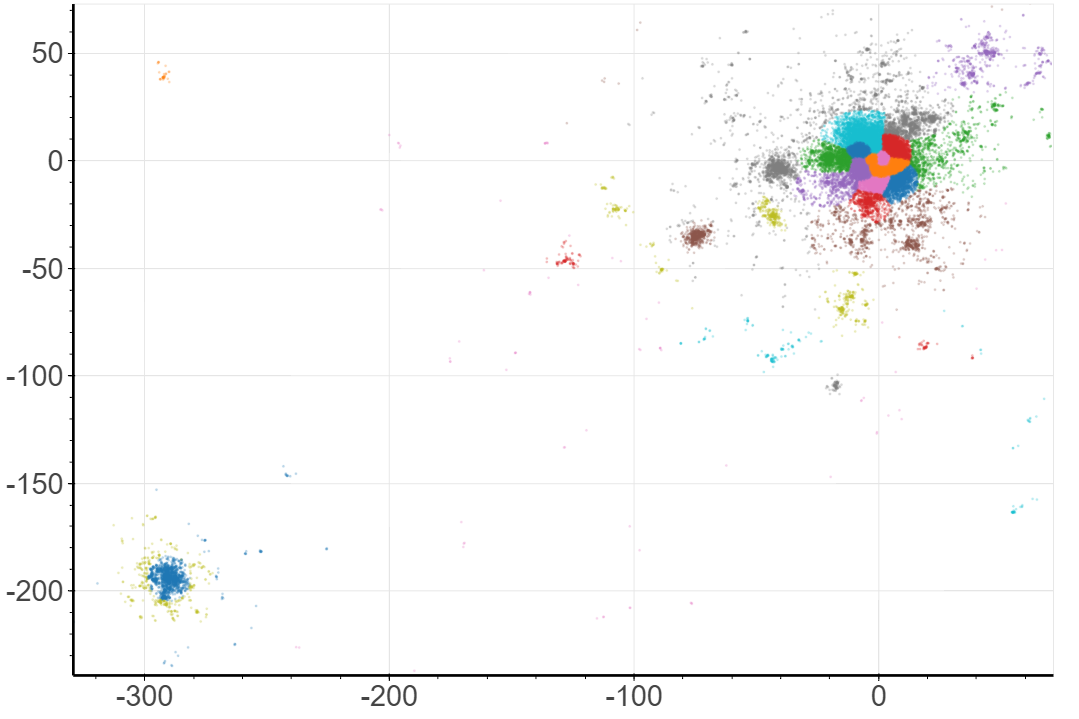}
         \caption{EM, $k = 50$}\label{fig:em}
        \end{subfigure}
           \begin{subfigure}[b]{0.49\textwidth}
            \centering
           \vspace{0.5cm}
            \includegraphics[width=\textwidth]{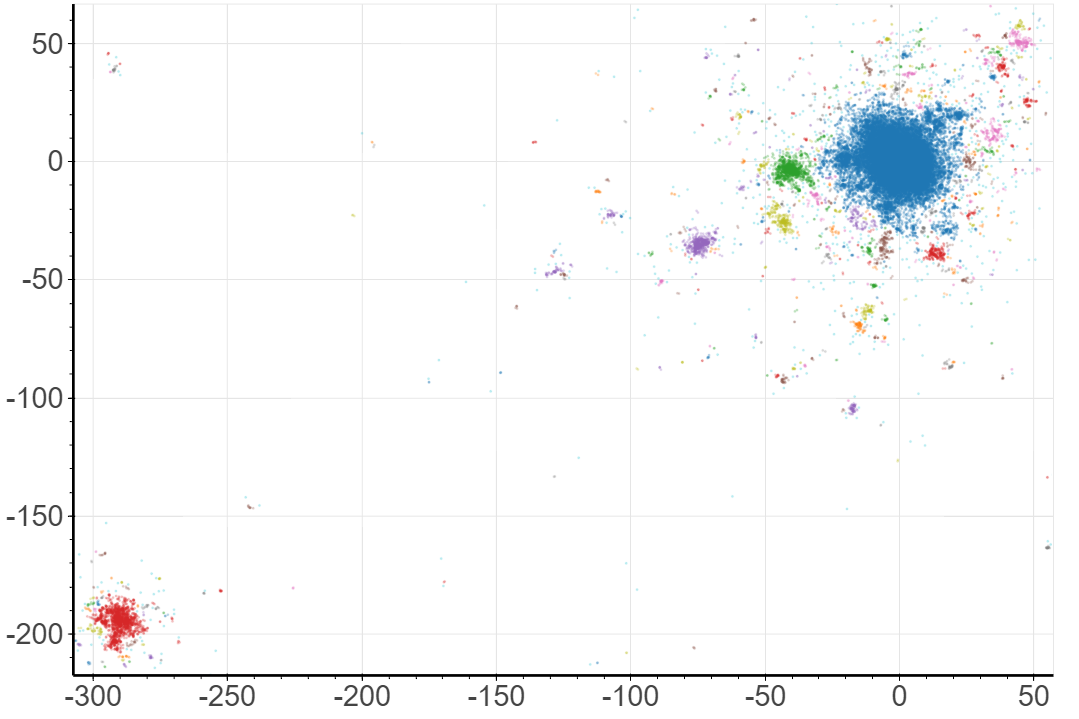}
         \caption{DBSCAN, $L = 125$, $l=1$}\label{fig:dbscan}
        \end{subfigure}
        \caption{The comparison of different clustering algorithms.}
        \label{fig:cluster_algorithms}
\end{figure*}

\subsubsection{Empirical analysis.}\label{sec:empirical_analysis}

\begin{figure*}[t]
        \centering
             \begin{subfigure}[b]{0.49\textwidth}
            \centering
            \includegraphics[width=\textwidth]{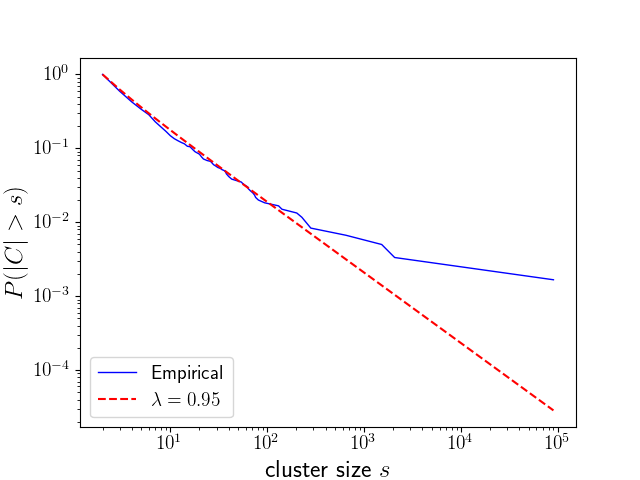}
         \caption{DBSCAN with $L=125,l=1$}\label{fig:size_distr}
        \end{subfigure}
        \begin{subfigure}[b]{0.49\textwidth}
            \centering
            \includegraphics[width=\textwidth]{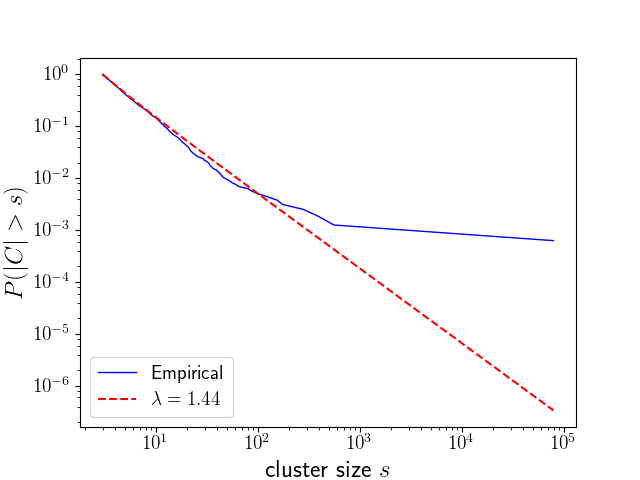}
         \caption{DBSCAN with $L=5,l=1$}\label{fig:size_distr_1}
        \end{subfigure}
          \caption{Cluster size distribution.}\label{fig:size_distr_all}
\end{figure*}
\begin{figure*}[h!]
            \centering
       \includegraphics[width=0.7\textwidth]{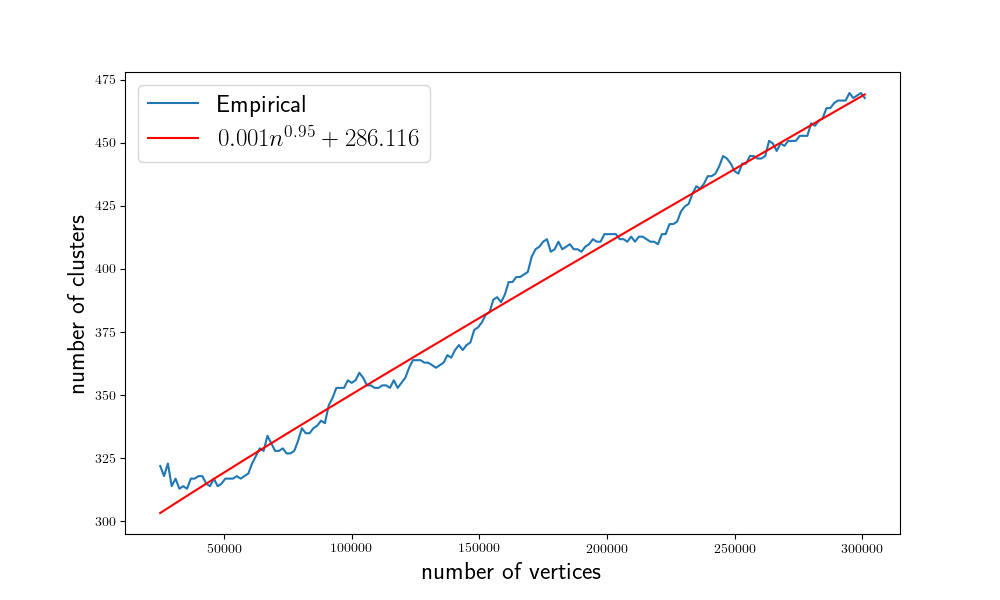}
        \caption{Growth of the number of clusters, DBSCAN with $L=125,l=1$.}\label{fig:number_of_clusters}
\end{figure*}

As we already mentioned, there is no standard definition of a clustering structure. In many cases, clusters and communities are defined just as a result of some clustering algorithm. Therefore, we first analyze the performance of several clustering algorithms, then choose the most appropriate one and analyze clusters it produces.

We compare the following algorithms: k-means~\cite{lloyd1982least}, EM (expectation maximization), and DBSCAN (density-based spatial clustering of applications with noise) \cite{ester1996density}.
For k-means and EM one has to specify the number of clusters. We tried several values of $k$, $k \in \{10, 50, 100, 500, 1000\}$, but both algorithms turned out to be not suitable for our problem. As expected, in all cases they unnaturally split the largest cluster into several small ones (see Figures~\ref{fig:k-means} and~\ref{fig:em}).

On the contrary, DBSCAN  produces more realistic results. It requires two parameters: radius of neighborhood $\varepsilon$ and the minimum number of neighbors required to form a dense region $l$. We consider $l \in \{1,2,3\}$ and $\varepsilon$ is chosen in such a way that if we connect all pairs of vertices $i,j$ such that $\|i - j\| < \varepsilon$, then we get $Ln$ edges, $L \in \{5, 25, 125\}$, where $n$ is the number of vertices. For all parameters we get reasonable clustering structures. The result for $L=125, l=1$ is presented on Figure~\ref{fig:dbscan}. For these parameters we also analyze the distribution of cluster sizes (see Figure~\ref{fig:size_distr}). Note that for not too large values of $s$ ($s<300$) the cumulative distribution follows a power law with parameter $\lambda \approx 0.95$. In Theorem~\ref{th:cluster_sizes} this value corresponds to the case $\alpha = 0.05$, i.e., $p(n) \propto n^{-0.05}$. Based on this, we expect the number of clusters to grow as $n^{0.95}$, i.e., close to linearly. In Figure~\ref{fig:number_of_clusters} we plot the empirical number of clusters and fit it by $n^{0.95}$.

Now, as we promised above, we show that $\lambda$ can depend on the clustering algorithm. 
Figure~\ref{fig:size_distr_1} shows the cumulative cluster size distribution for DBSCAN with $L=5,l=1$. We obtain $\lambda=1.44$, so it is larger in this case. Intuitively, the reason is that $p(n)$ is larger for $L=5$ than for $L=125$. Smaller values of $L$ correspond to smaller $\varepsilon$, which means that it is harder for a new vertex to join some existing cluster, which makes $p(n)$ larger.

Finally, let us also discuss the bend in the distribution observed on Figure~\ref{fig:size_distr_all}. Namely, one can clearly see that several largest cluster sizes do not fit the desired power-law curve. This phenomenon is especially prominent for the largest cluster. The possible reason of this bend is the fact that in reality our first assumption in Section~\ref{sec:theory} is violated. Namely, we assumed that clusters cannot merge. However, Figure~\ref{fig:dbscan} clearly shows that the largest cluster further consists of several smaller ones (which is a desired hierarchical community structure). At previous steps of the construction process, these smaller clusters were separated, but then they merged. This fact caused the largest cluster to be larger than predicted, which we can see on Figure~\ref{fig:size_distr_all}. Note that such phenomenon is expected only for a few largest clusters.

\subsubsection{Influence of parameters}\label{sec:parameters}

\begin{figure*}[t]
        \centering
        \begin{subfigure}[b]{0.49\textwidth}
            \centering
            \includegraphics[width=\textwidth]{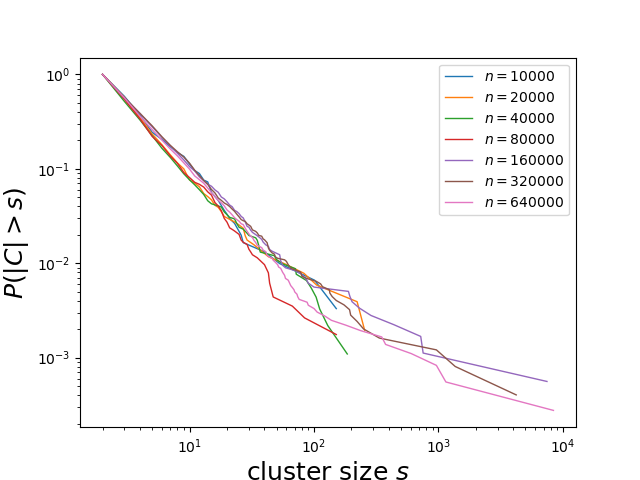}
        \caption{Number of vertices $n$}\label{fig:sizes}
        \end{subfigure}
        \begin{subfigure}[b]{0.49\textwidth}
            \centering
        \includegraphics[width=\textwidth]{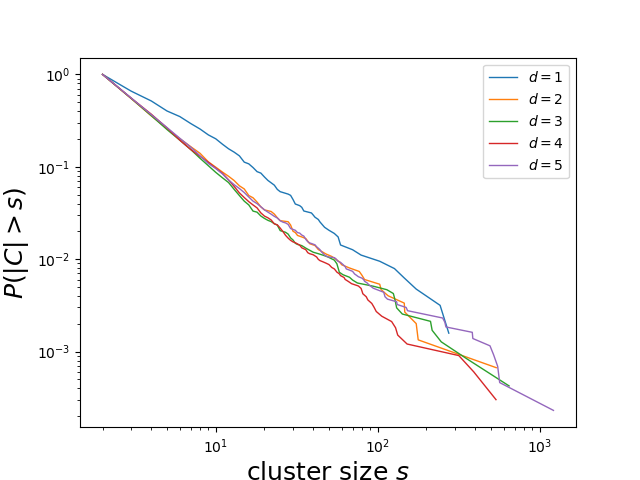}
         \caption{Dimension $d$}\label{fig:dims}
        \end{subfigure}
           \begin{subfigure}[b]{0.49\textwidth}
            \centering
           \vspace{0.5cm}
            \includegraphics[width=\textwidth]{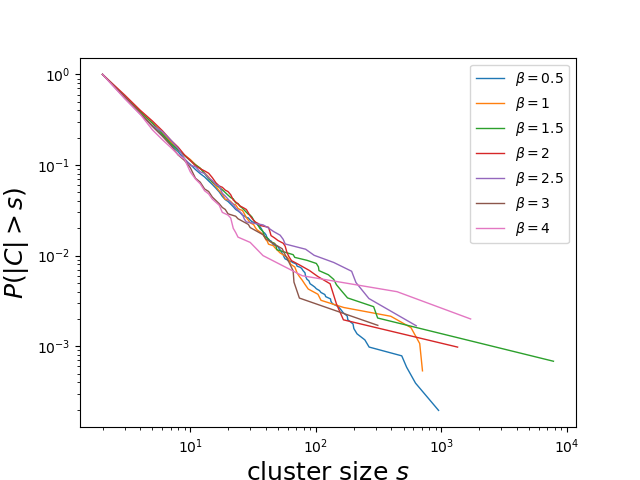}
         \caption{Pareto parameter $\beta$}\label{fig:betas}
        \end{subfigure}
        \caption{Influence of parameters on the distribution of cluster sizes.}\label{fig:parameters}
\end{figure*}

In this section, we discuss how the distribution of cluster sizes is affected by the parameters of the model. In particular, we show that our conclusions hold for various values of the number of vertices $n$, the parameter of Pareto distribution $\beta$, and the dimension $d$. 

As a starting point, we take already discussed parameters $n=10^5$, $d=2$, $\beta = 1.5$ and independently vary each parameter. The obtained results are presented in Figure~\ref{fig:parameters}.\footnote{According to the discussion above, for all curves, we removed the last point, which is an outlier corresponding to the largest cluster.} We noticed that the distribution of cluster sizes is stable and is not affected much by the parameters of the model. The most different curve corresponds to the simple one-dimensional case $d=1$.  

\subsection{Scatter of configurations}\label{sec:radius}

In this section, we analyze the scatter of vertices in $\mathbb{R}^d$. This analysis is motivated by two factors. First,  the scatter of points in $\mathbb{R}^d$ can be considered as an analogue of the diameter of graphs, which is an important and widely studied characteristic.
Second, preferential placement is a simple and elegant process and its theoretical properties are of interest by themselves. Here we were partially motivated by the study of L{\'e}vy flights~\cite{chechkin2008introduction}: it is known that for L{\'e}vy flights we have $\langle |x|^{\delta }\rangle \propto t^{\delta /\beta }{\text{ if }}\delta \leq \beta$. Below we prove a similar result for preferential placement.

Recall that we assume the distribution $\Xi$ to have a density function $f_\beta(x) = \frac \beta {(x + 1)^{\beta+1}}$, ${x\geq 0}$, ${\beta > 0}$.
Let us first consider the variable
$$
X_{\delta}(n) = \frac 1 n \sum_{i=1}^n  \|\x_i\|^\delta, \,\,\, \delta < \beta\,,
$$ 
which is the average moment of order $\delta$ for the distance from the origin to a vertex. 
Recall that the most interesting case is $1 < \beta < 2$, for such $\beta$ the expected average squared distance $\E X_{2}(n)$ diverges, therefore we analyze $\E X_{\delta}(n)$ for $1 \le \delta \le 2$. The following theorem holds.

\begin{theorem}\label{thm:moment}
Let $\xi$ be a random variable sampled from the distribution $\Xi$. 
For any $\delta$, $1 \le \delta < \min\{\beta,2\}$:
$$
\E X_{\delta}(n) \le \E \xi^\delta \sum_{t = 2}^n \frac 1 t \sim \E \xi^\delta \log n\,.
$$
If $\beta > 2$, then 
$$
\E X_{2}(n) = \E \xi^2 \sum_{t = 2}^n \frac 1 t \sim \E \xi^2 \log n\,.  
$$
\end{theorem}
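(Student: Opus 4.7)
My plan is to use the tree-based interpretation from Section~\ref{sec:model_tree}. Writing $\pi(k)$ for the root-to-$v_k$ path in the genealogical URRT and $D_k=|\pi(k)|$ for its length (the depth of $v_k$),
\[
\x_k=\sum_{t\in\pi(k)}\xi_t\,\mathbf{e}_t,
\]
where the $\xi_t$ are i.i.d.\ from $\Xi$ and the $\mathbf{e}_t$ are i.i.d.\ uniform on the unit sphere, all mutually independent. The two inputs from outside the computation are the standard URRT fact $\E D_k=H_{k-1}:=\sum_{i=1}^{k-1}1/i$ and the telescoping identity $\sum_{k=1}^n H_{k-1}=n(H_n-1)$, which together convert per-vertex bounds into bounds on the average $\E X_\delta(n)=\frac{1}{n}\sum_k \E\|\x_k\|^\delta$.

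For the exact $\delta=2$ case (which needs $\beta>2$), I would condition on $\pi(k)$ and $\{\xi_t\}_{t\in\pi(k)}$ and use isotropy of the $\mathbf{e}_t$'s: $\E[\mathbf{e}_s\cdot\mathbf{e}_t]=0$ for $s\ne t$ and $\|\mathbf{e}_t\|=1$, so cross terms vanish and
\[
\E\bigl[\|\x_k\|^2 \,\big|\, \pi(k),\{\xi_t\}\bigr] \;=\; \sum_{t\in\pi(k)}\xi_t^2.
\]
Integrating gives $\E\|\x_k\|^2=H_{k-1}\E\xi^2$, and averaging over $k$ via the harmonic-number identity yields the claimed equality. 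For the range $1\le\delta<\min\{\beta,2\}$, where $\E\xi^2$ may be infinite, the key trick is to apply Jensen's inequality \emph{before} integrating out the $\xi_t$'s. Since $\delta/2\le 1$ the map $y\mapsto y^{\delta/2}$ is concave on $\mathbb{R}_+$, so
\[
\E_{\mathbf{e}}\bigl[\|\x_k\|^\delta\bigr] \;=\; \E_{\mathbf{e}}\bigl[(\|\x_k\|^2)^{\delta/2}\bigr] \;\le\; \Bigl(\sum_{t\in\pi(k)}\xi_t^2\Bigr)^{\delta/2} \;\le\; \sum_{t\in\pi(k)}\xi_t^\delta,
\]
where the last inequality is subadditivity of $y\mapsto y^{\delta/2}$ on $\mathbb{R}_+$ (again because $\delta/2\le 1$). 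Taking expectations is legitimate since $\E\xi^\delta<\infty$ whenever $\delta<\beta$, and averaging over $k$ as before yields $\E X_\delta(n)\le\E\xi^\delta\sum_{t=2}^n 1/t$.

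I do not expect a real obstacle; the only subtle point is the order of the two simplifications. The naive route of applying the triangle inequality $\|\x_k\|\le\sum_{t\in\pi(k)}\xi_t$ and then taking $\delta$-th powers would require $\delta\le 1$ for the subadditivity step, and would throw away the isotropy of the $\mathbf{e}_t$'s. Applying Jensen first uses isotropy to reduce $\|\x_k\|^\delta$ to a quantity in $\sum\xi_t^2$, and then subadditivity converts this to $\sum\xi_t^\delta$, which is finite in expectation precisely in the heavy-tailed regime $\delta<\beta<2$ that is the main interest of the section.
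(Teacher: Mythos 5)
Your proof is correct, and it takes a genuinely different route from the paper's. The paper proves Theorem~\ref{thm:moment} by a recursion in $n$ on the average itself: conditioning on the choice of parent at step $n+1$, it derives $\E X_\delta(n+1) \le \E X_\delta(n) + \frac{1}{n+1}\E\|\bxi\|^\delta$ (with exact equality for $\delta=2$, $\beta>2$, since the cross term $2\,\x_i\cdot\bxi$ has zero mean), and the inequality for $1\le\delta<\min\{\beta,2\}$ rests on the separate symmetrization result Lemma~\ref{lem:moments}, namely $\E\|\bxi+\boldeta\|^\delta \le \E\|\bxi\|^\delta + \E\|\boldeta\|^\delta$ for independent vectors with $\bxi$ symmetric, itself proved via the pointwise parallelogram-type inequality $\frac12\left(\|\x+\y\|^\delta+\|\x-\y\|^\delta\right)\le\|\x\|^\delta+\|\y\|^\delta$. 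You instead argue per vertex through the tree interpretation: isotropy kills the cross terms conditionally, conditional Jensen plus subadditivity of $y\mapsto y^{\delta/2}$ reduce $\|\x_k\|^\delta$ to $\sum_{t\in\pi(k)}\xi_t^\delta$, and then the URRT depth formula $\E D_k = H_{k-1}$ together with $\sum_{k=1}^n H_{k-1}=n(H_n-1)$ produce exactly the stated sum $\sum_{t=2}^n \frac1t$ (all of these steps check out, including the independence of the tree from the edge labels, which the model guarantees). What each approach buys: yours sidesteps Lemma~\ref{lem:moments} entirely, replacing it with textbook tools, and yields strictly more refined per-vertex information, $\E\|\x_k\|^2 = H_{k-1}\E\xi^2$ exactly and $\E\|\x_k\|^\delta \le H_{k-1}\E\xi^\delta$, rather than only the average over $k$; the paper's lemma, on the other hand, requires only symmetry of the jump distribution rather than the full isotropy your Jensen step exploits, and is of some independent interest. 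Your closing remark about the order of operations is exactly right and mirrors why the paper also cannot use a naive triangle-inequality argument: pushing the expectation over directions inside before integrating out the $\xi_t$'s is what keeps the bound finite in the heavy-tailed regime $\E\xi^2=\infty$.
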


Note that in contrast to L{\'e}vy flights, preferential placement is much more concentrated near the origin (on average). Namely, $X_{\delta}(n)$ grows at most logarithmically with $n$ (instead of $n^{\delta/\beta}$). The reason is that preferential placement makes jumps from random vertices, while L{\'e}vy flight always chooses the latest one.

In the proof of Theorem~\ref{thm:moment} we will use the following lemma.

\begin{lemma}\label{lem:moments}
	Let $\bxi$ and $\boldeta$ be two independent random vectors. Assume that the distribution of $\bxi$ is symmetrical, i.e., the distributions of $\bxi$ and $-\bxi$ are identical. Then for any $1 \le \delta \le 2$:
	$$
	\E\|\bxi + \boldeta\|^\delta \le \E \|\bxi\|^\delta + \E \|\boldeta\|^\delta \,.
	$$
\end{lemma}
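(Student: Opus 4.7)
The plan is to reduce the inequality to a pointwise Clarkson-type bound via symmetrization, and then prove the pointwise bound using the Euclidean parallelogram identity together with concavity and subadditivity of $t \mapsto t^p$ for $p \in [1/2,1]$.

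First I would symmetrize. Since $\bxi \stackrel{d}{=} -\bxi$ and $\bxi$ is independent of $\boldeta$, the pairs $(\bxi,\boldeta)$ and $(-\bxi,\boldeta)$ are equidistributed, so
$$\E\|\bxi + \boldeta\|^\delta \;=\; \tfrac{1}{2}\bigl(\E\|\bxi + \boldeta\|^\delta + \E\|\boldeta - \bxi\|^\delta\bigr).$$
It therefore suffices to prove the deterministic inequality
$$\|\x + \y\|^\delta + \|\x - \y\|^\delta \;\le\; 2\bigl(\|\x\|^\delta + \|\y\|^\delta\bigr) \qquad (\ast)$$
for all $\x,\y \in \mathbb{R}^d$ and all $1 \le \delta \le 2$, and then take expectations with $\x = \bxi$, $\y = \boldeta$.

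To establish $(\ast)$, I would set $p = \delta/2 \in [1/2,1]$, $a = \|\x + \y\|^2$, $b = \|\x - \y\|^2$. The parallelogram identity in $\mathbb{R}^d$ gives $a + b = 2(\|\x\|^2 + \|\y\|^2)$. Concavity of $t \mapsto t^p$ on $[0,\infty)$ then yields $a^p + b^p \le 2\bigl(\tfrac{a+b}{2}\bigr)^p = 2(\|\x\|^2 + \|\y\|^2)^p$, and subadditivity of $t \mapsto t^p$ for $p \in [0,1]$ further bounds this by $2(\|\x\|^{2p} + \|\y\|^{2p}) = 2(\|\x\|^\delta + \|\y\|^\delta)$, which is exactly $(\ast)$.

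I do not anticipate a real obstacle. The two places requiring mild care are the use of the symmetry hypothesis on $\bxi$ (dropping it would cost an extra factor of $2$, as in the von Bahr--Esseen inequality) and the restriction $\delta \le 2$, which is exactly what makes $p \le 1$ and legitimizes both the concavity and the subadditivity steps. In a general Banach space one could not invoke the parallelogram identity and would instead need the full Clarkson inequalities, but the Euclidean setting of the theorem makes this step essentially trivial.
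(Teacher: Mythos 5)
Your proof is correct and takes essentially the same route as the paper's: the identical symmetrization step reduces the claim to the pointwise bound $\|\x+\y\|^\delta + \|\x-\y\|^\delta \le 2\left(\|\x\|^\delta + \|\y\|^\delta\right)$, which the paper also derives from the parallelogram identity. Your concavity-plus-subadditivity argument for $t \mapsto t^{\delta/2}$ is merely a cleaner packaging of the paper's normalization ($\|\x\|^2+\|\y\|^2=1$) followed by maximizing $u^{\delta/2}+(1-u)^{\delta/2}$, so the two proofs differ only in bookkeeping.
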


We place the proof of this lemma to Appendix. Now we are ready to prove Theorem~\ref{thm:moment}.
\vspace{0.2cm}

\begin{proof}
Let us denote by $\bxi$ a vector of length $\xi$ (sampled from the distribution $\Xi$) pointing to a random direction, i.e., $\bxi = \xi \mathbf{e}$.

For $\beta > 2$ the following recurrent formula holds:
\begin{multline*}
\E X_2(n+1) = \frac{n}{n+1} \E X_2(n) + \frac{1}{n+1} \cdot \frac{1}{n} \cdot \sum_{i=1}^n \E \|\x_i+\bxi\|^2 \\ = \frac{n}{n+1} \E X_2(n) + \frac{1}{n+1} \cdot \frac{1}{n} \cdot \sum_{i=1}^n \E (\|\x_i\|^2+\|\bxi\|^2 + 2 \, \x_i \cdot \bxi) = \E X_2(n) + \frac{1}{n+1} \E \|\bxi\|^2\,.
\end{multline*}
Using the above recurrent formula and the equality $\E X_2(1) = 0$, we prove the second part of the theorem:
$$
\E X_2(n) = \E \|\bxi\|^2 \cdot \sum_{t=2}^n \frac{1}{t} \sim \E \xi^2 \cdot \log n\,.
$$

To prove the first part we use Lemma~\ref{lem:moments}:
\begin{multline*}
\E X_\delta(n+1) = \frac{n}{n+1} \E X_\delta(n) + \frac{1}{n+1} \cdot \frac{1}{n} \cdot \sum_{i=1}^n \E \|\x_i+\bxi\|^\delta
\le \frac{n}{n+1} \E X_\delta(n)  \\ + \frac{1}{n+1} \cdot \frac{1}{n} \cdot \sum_{i=1}^n \E \|\x_i\|^\delta + \frac{1}{n+1} \cdot \frac{1}{n} \cdot \sum_{i=1}^n \E \|\bxi\|^\delta
= \E X_\delta(n) + \frac{1}{n+1} \E \|\bxi\|^\delta \,,
\end{multline*}
so we obtain the recurrent formula for $\E X_{\delta}(n)$ from which the theorem follows. 
\end{proof}

\vspace{0.2cm}

In addition to $X_\delta(n)$, we analyze the following variable:
$$
X_{max}(n) = \max_{1\le i \le n} \|\x_i\|\,.
$$
Note that both $X_{\delta}(n)$ and $X_{max}(n)$ are interesting to analyze. For example, making parallels with graph theory, $X_{max}(n)$ is an analogue of the diameter, while $X_{\delta}(n)$ is similar to the average shortest path length~\cite{boccaletti2006complex,newman2003structure} which is often studied as a more stable analogue of the diameter.

For $X_{max}(n)$ the following simple lemma holds.
\begin{lemma}\label{lem:max}
Let $\omega(n)$ be any function tending to infinity as $n$ tends to infinity. Then 
$$
\Prob\left(X_{max}(n) > \frac {n^{\frac 1 \beta}} {\omega(n)} \right) = 1 - o(1)\,.
$$
\end{lemma}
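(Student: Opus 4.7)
The plan is to lower-bound $X_{max}(n)$ by the maximum of the edge jumps $\xi_t$, so that everything reduces to a standard i.i.d.\ maximum estimate. For every $t\ge 2$, the construction gives $\x_t - \x_{i_t} = \xi_t\,\mathbf{e}_t$, hence $\|\x_t - \x_{i_t}\| = \xi_t$. Since both $\x_t$ and $\x_{i_t}$ belong to $S_n$, the triangle inequality yields
$$
\xi_t \;=\; \|\x_t-\x_{i_t}\| \;\le\; \|\x_t\| + \|\x_{i_t}\| \;\le\; 2\,X_{max}(n)\,,
$$
so that $X_{max}(n) \ge \tfrac12 \max_{2\le t\le n}\xi_t$. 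This reduction wipes out the dependence on the random directions $\mathbf{e}_t$ and on the geometry induced by the genealogical tree, and it is the only structural step in the argument.

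The remaining task is to show that $\max_{2\le t\le n}\xi_t > 2\,n^{1/\beta}/\omega(n)$ with probability $1-o(1)$. By construction the $\xi_t$ are i.i.d.\ with $\Prob(\xi>y) = (y+1)^{-\beta}$, so setting $y_n := 2\,n^{1/\beta}/\omega(n)$ one has
$$
\Prob\bigl(\max_{2\le t\le n}\xi_t \le y_n\bigr) \;=\; \bigl(1-(y_n+1)^{-\beta}\bigr)^{n-1}\,.
$$
Because $\omega(n)\to\infty$, $y_n\to\infty$, and $(y_n+1)^{-\beta}\sim y_n^{-\beta}$, so $(n-1)(y_n+1)^{-\beta}\sim \omega(n)^\beta/2^\beta \to \infty$, which forces the probability above to tend to $0$. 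Combined with the previous reduction this gives $\Prob\bigl(X_{max}(n) > n^{1/\beta}/\omega(n)\bigr) = 1-o(1)$, as desired.

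Because the argument uses only the triangle inequality and the Pareto tail, there is no genuine obstacle; the one minor point worth verifying is that the $\xi_t$ are jointly independent (which is built into the definition in Section~\ref{sec:model_def} and, crucially, independent of the genealogical tree), so that the probability of the maximum factorises as a product. It is also worth noting that the reduction $X_{max}(n)\ge \tfrac12\max_t \xi_t$ is rather lossy in general, but since $\omega(n)$ is an arbitrary function tending to infinity, this extra factor of $\tfrac12$ is absorbed without changing the conclusion.
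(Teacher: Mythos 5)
Your proof is correct and follows essentially the same route as the paper's: lower-bound $X_{max}(n)$ via the triangle inequality by the maximum of the i.i.d.\ jumps $\xi_t$, then apply the standard extreme-value computation for the Pareto-type tail. If anything, you are more careful than the paper's terse write-up, which states the implication ``$\xi_t > 2r$ for some $t$ implies $X_{max}(n) > r$'' but then bounds by $\Prob(\xi_t \le r \text{ for all } t)$ rather than $\Prob(\xi_t \le 2r \text{ for all } t)$ --- a harmless slip, since the factor of $2$ is absorbed by $\omega(n)\to\infty$ exactly as you observe.
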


\begin{proof}

Let $r = \frac{n^{\frac 1 \beta}}{\omega(n)}.$ If for at least one step $t$ we have $\xi_t > 2r$, then $X_{max}(n)>r$, therefore
$$
\Prob(X_{max}(n)\le r) \le \Prob(\xi_t \le r\text{ for all $1\le t \le n$}) = \left(1 - \frac{1}{(1+r)^{\beta}} \right)^n \\
\sim e^{\frac{-n}{(1+r)^\beta}} = o(1)\,.
$$

\end{proof}

As a result, we obtain that ``on average'' preferential placement is well concentrated near the origin ($X_{\delta}(n)$ grows at most logarithmically). However, according to Lemma~\ref{lem:max}, the diameter of the obtained configuration grows at least as $n^{1/\beta}$, i.e., it is highly affected by outliers.

\section{Graph models}\label{sec:graph}

\subsection{Possible definitions}

In this section, we discuss how a graph can be constructed based on the vertex embedding produced by the preferential placement procedure.

\paragraph{Spatial distance.} The basic idea behind many known spatial models is that we want to increase the probability of connecting two vertices if they have similar latent features. Various methods can be found in the literature, which are usually combined with some other ideas like introducing weights of vertices or taking into account degrees of vertices (see, e.g., a survey of spatial models in \cite{barthelemy2011spatial}). We now briefly describe some possible approaches:

\begin{itemize}
\item  \emph{threshold model}~\cite{bradonjic2008structure,masuda2005geographical}: 
\begin{equation}\label{eq:threshold}
\Prob\big( (v_i,v_j) \in E \big) = I\big[ \| \x_i - \x_j \| \leq \theta \big]\,,\,\,\,\,\,\,\theta >0\,;
\end{equation}
\item  \emph{$p$-threshold model}:  $$
\Prob\big( (v_i,v_j) \in E \big) = p I\big[ \| \x_i - \x_j \| \leq \theta \big]\,,\,\,\,\,\,0<p<1,\,\theta>0\,;
$$
\item  \emph{$p$-threshold model with random edges} (as in spatial small-world models~\cite{barthelemy2011spatial}): $$
\Prob\big( (v_i,v_j) \in E \big) = p_0 + p_1  I\big[ \| \x_i - \x_j \| \leq \theta \big] \,,\,\,\,\,\,0<p_0,p_1,p_0+p_1<1,\,\theta>0\,;
$$
\item  \emph{inverted distance model}: $$
\Prob\big( (v_i,v_j) \in E \big) \propto \frac{1}{ \| \x_i - \x_j \| }\,;
$$
\item  \emph{Waxman model}~\cite{waxman1988routing}: $$
P\big( (v_i,v_j) \in E \big) \propto e^{-\| \x_i - \x_j \| / d }\,, \,\,\,\,\, d>0\,.
$$
\end{itemize}
Here we denote by $E$ the set of edges and by $I[\cdot]$ the indicator function. 
We assume that all edges are mutually independent, hence to describe a random graph it is enough to define the probability of each edge. 
In Section~\ref{sec:case_study} we present some empirical analysis for the basic threshold model as well as for its modification based also on a genealogical distance which we define in the next paragraph.


\paragraph{Genealogical distance.} It is also reasonable to take into account the genealogical tree used to create the configuration of vertices (see Section~\ref{sec:model_tree} for the description of this tree). For example, it seems very natural to connect each vertex to its parent in the genealogical tree. This serves several purposes: (i) the underlying genealogical process is reflected in the obtained graph, (ii) the graph is guaranteed to be connected, (iii) the graph has small diameter, as we show further in this section. 

More generally, let $d_{tree}(v_i,v_j)$ be the length of the unique path between $v_i$ and $v_j$ in the genealogical tree. Then let 
\begin{equation}\label{eq:general_model}
\Prob\big( (v_i,v_j) \in E \big) = F(\|\x_i-\x_j\|,d_{tree}(v_i,v_j))\,,
\end{equation}
where $F(x,y): \mathbb{R}_{+} \times \mathbb{N} \to [0,1]$ can be any function non-increasing with $x$ for fixed $y$ and with $y$ for fixed $x$.

In Section~\ref{sec:case_study}, in addition to the threshold model~\eqref{eq:threshold}, we also analyze the following \textit{genealogical threshold model}, which is a particular case of~\eqref{eq:general_model}: 
\begin{equation}\label{eq:threshold_distance}
\Prob\big( (v_i,v_j) \in E \big) = I\big[ \| \x_i - \x_j \| \leq \theta \text{ or } d_{tree}(v_i,v_j)) = 1\big]\,.
\end{equation}

\subsection{Case study}\label{sec:case_study}

\begin{figure*}[t]
        \centering
        \begin{subfigure}[b]{0.49\textwidth}
            \centering
          \includegraphics[width=\textwidth]{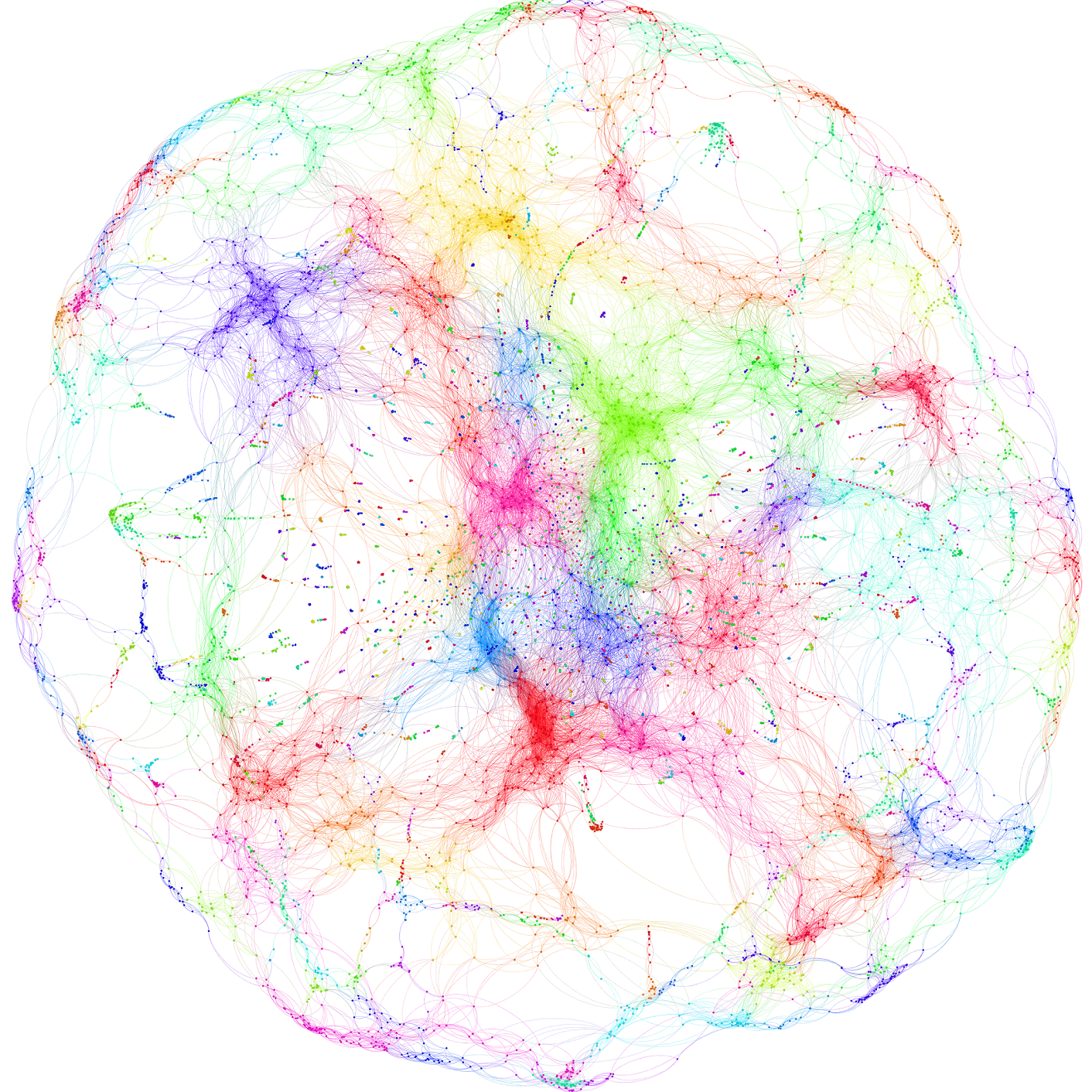}
        \caption{Threshold model}
        \end{subfigure}
        \begin{subfigure}[b]{0.49\textwidth}
            \centering
        \includegraphics[width=\textwidth]{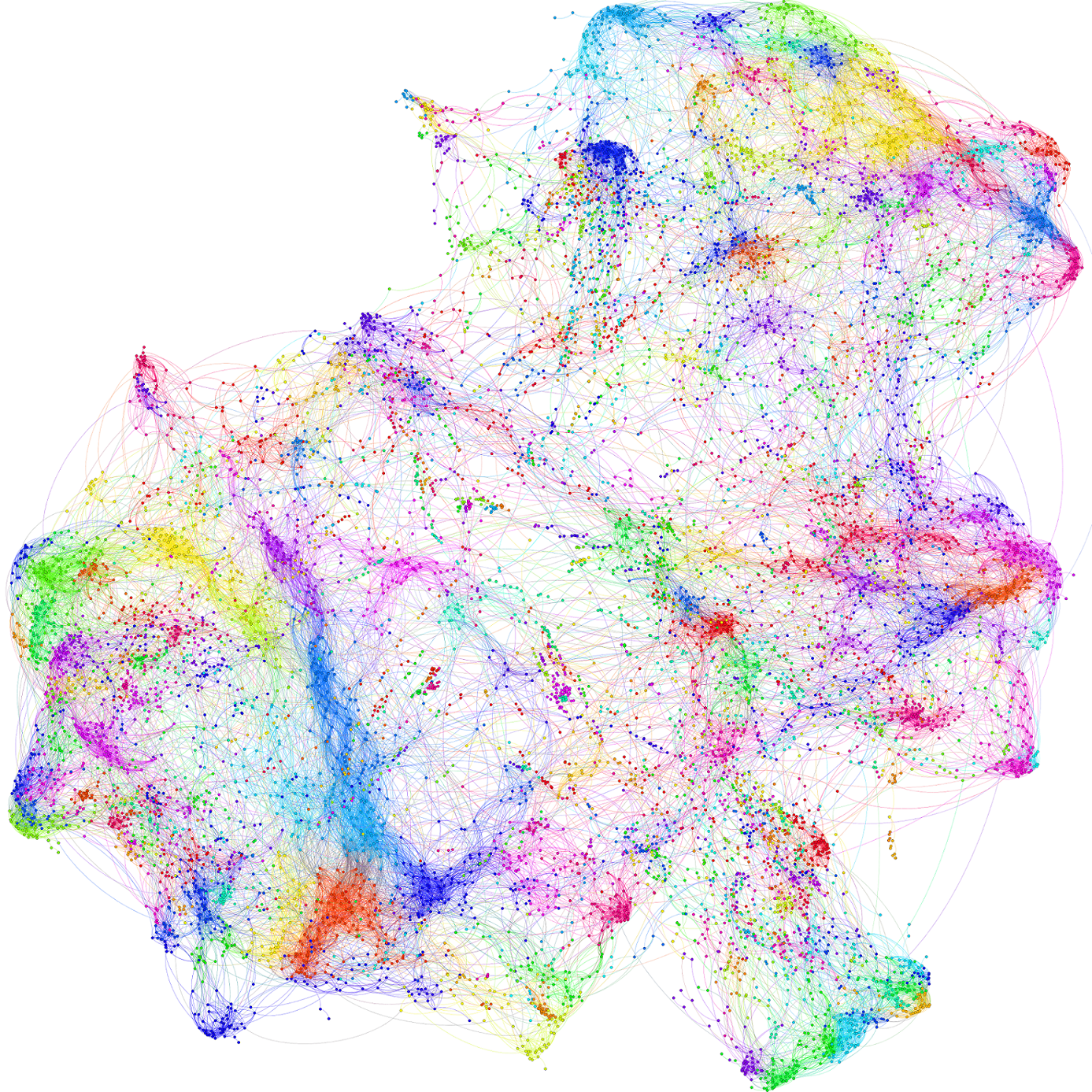}
         \caption{Genealogical threshold model}
        \end{subfigure}
\caption{Visualization of the obtained graphs, laid out with ForceAtlas 2~\cite{jacomy2014forceatlas2}; vertices are colored according to a partition obtained by label propagation community detection algorithm (LPA)~\cite{raghavan2007near}.}\label{fig:visualization}
\end{figure*}

As we promised above, in this section we analyze two models: the threshold model and the genealogical threshold model. In both cases, we choose $\theta$ such that we have $5n$ edges with $\| \x_i - \x_j \| \leq \theta$ in our graph.
As before, we take $\Xi$ to be a distribution with the density function $f_\beta(x) = \frac \beta {(x + 1)^{\beta+1}}, x\geq 0$ for $\beta = 1.5$. 

\paragraph{Visualization.}
Figure~\ref{fig:visualization} visualizes the obtained graphs with $n = 10^4$ (note that other empirical results are obtained for $n=10^5$, as usual). In both cases, one can clearly see communities of various sizes. One difference between two graphs is that in the threshold model the graph is not connected, as expected: 16\% of vertices are isolated and the giant connected component consists of 54\% of vertices.




\paragraph{Degree distribution.}
Let us empirically analyze the degree distribution for both models~\eqref{eq:threshold} and~\eqref{eq:threshold_distance}. The cumulative degree distributions are presented on Figure~\ref{fig:degree_distr}.
Observe that these distributions do not follow a power law. However, they are very similar to degree distributions obtained in many real-world networks (numerous examples can be found in~\cite{real_degree_distribution}).

\begin{figure*}[t]
            \centering
        \begin{subfigure}[b]{0.49\textwidth}
            \centering
          \includegraphics[width=\textwidth]{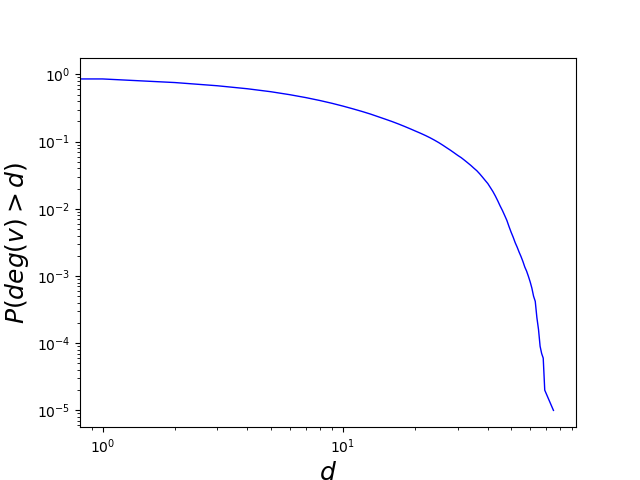}
        \caption{Threshold model}
        \end{subfigure}     
                \begin{subfigure}[b]{0.49\textwidth}
            \centering
          \includegraphics[width=\textwidth]{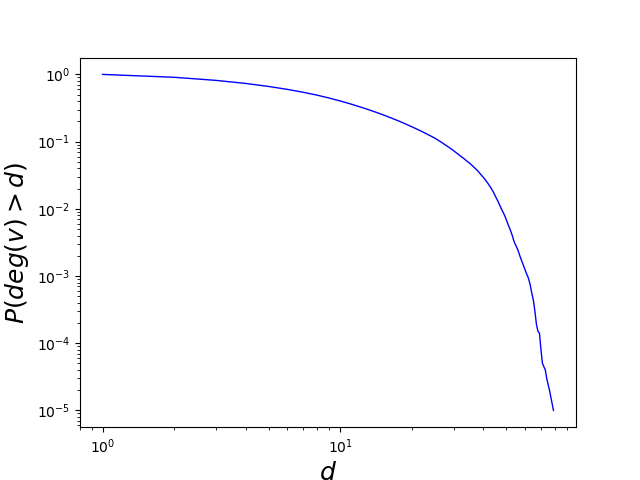}
        \caption{Genealogical threshold model}
        \end{subfigure}
         \caption{Cumulative degree distribution.}\label{fig:degree_distr}
\end{figure*}

\paragraph{Distribution of community sizes.}
Let us also empirically analyze the distribution of community sizes, which is the main focus of the current research. 
To do this, we partition the vertices using a well-known label propagation community detection algorithm (LPA) first proposed in~\cite{raghavan2007near}. The cumulative community size distributions are presented on Figure~\ref{fig:com_distr}. 
Note that both distributions can be approximated by a power law. However, such approximation gives a better fit for the genealogical threshold model.

\begin{figure*}[t]
            \centering
        \begin{subfigure}[b]{0.49\textwidth}
            \centering
          \includegraphics[width=\textwidth]{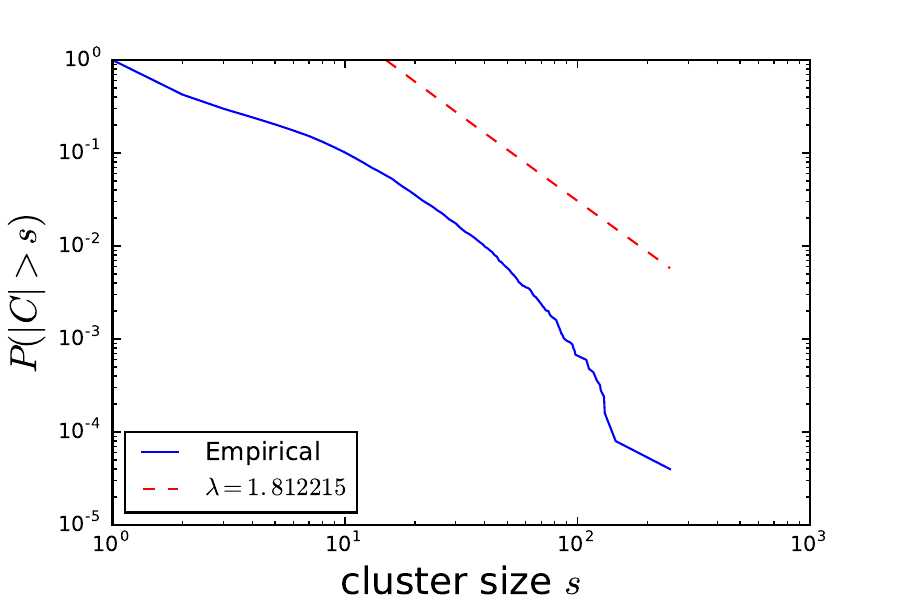}
        \caption{Threshold model}
        \end{subfigure}     
                \begin{subfigure}[b]{0.49\textwidth}
            \centering
          \includegraphics[width=\textwidth]{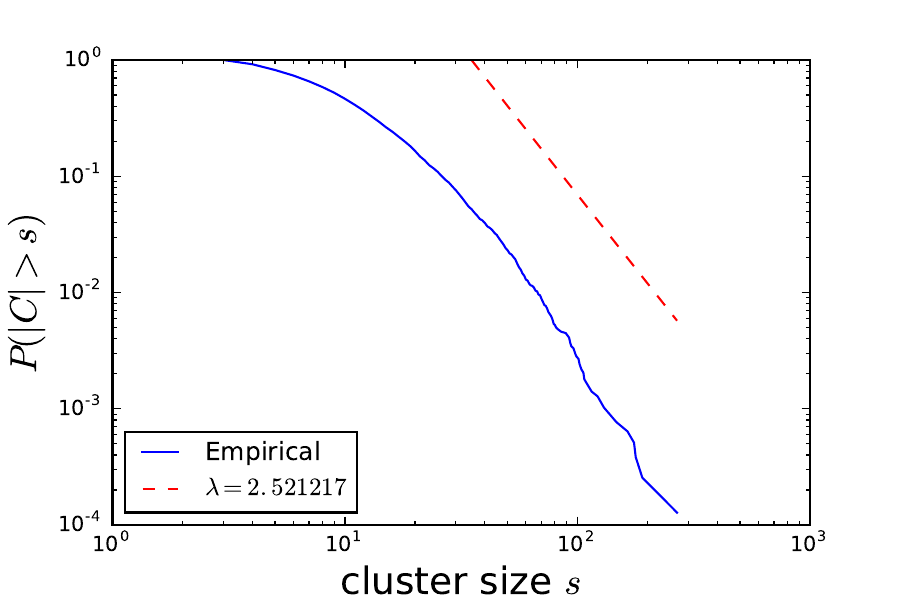}
        \caption{Genealogical threshold model}
        \end{subfigure}
         \caption{Cumulative distribution of community sizes.}\label{fig:com_distr}
\end{figure*}

\paragraph{Diameter.}
Now let us show that graphs obtained according to the genealogical threshold model~\eqref{eq:threshold_distance} have small (at most logarithmic) diameter.
In order to do this, we apply the following theorem~\cite{pittel1994note}.

\begin{theorem}
Let $H_n$ be the height of a uniform random recursive tree $T_n$. Then, with probability one,
$$
\lim_{n \to \infty} \frac{H_n}{\log n} = e\,.
$$
\end{theorem}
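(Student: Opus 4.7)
The plan is to pass through the continuous-time Yule process that generates the URRT. I would equip each individual with an independent $\operatorname{Exp}(1)$ clock on which it gives birth, start from a single root at time $0$, and let $T_n$ denote the birth time of the $n$-th individual. Classical facts give $T_n-T_{n-1}\sim\operatorname{Exp}(n-1)$ independently, whence $T_n/\log n\to 1$ almost surely by the strong law applied to the harmonic sums, and the genealogy of the first $n$ individuals is distributed as the URRT on $n$ vertices. Consequently $H_n$ equals the maximum depth in the Yule tree at time $T_n$, and I may work in the continuous-time model throughout.

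Next I would control the generation counts $Z_k(t)=\#\{\text{individuals at depth }k\text{ at time }t\}$. Solving $\partial_t\,\E[Z_k(t)]=\E[Z_{k-1}(t)]$ with $\E[Z_0(t)]\equiv 1$ yields $\E[Z_k(t)]=t^k/k!$. Setting $k=\alpha\log n$ and applying Stirling,
\[
\E[Z_k(\log n)] = \exp\bigl(\alpha(1-\log\alpha)\log n + O(\log\log n)\bigr),
\]
so the rate $\alpha(1-\log\alpha)$ is positive for $\alpha<e$, vanishes at $\alpha=e$, and is negative for $\alpha>e$. This pins down $e$ as the candidate speed.

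For the upper bound $\limsup H_n/\log n\le e$ almost surely, I would fix $\alpha>e$, condition on the high-probability event $T_n\le(1+\epsilon)\log n$, and apply Markov to get
\[
\Prob(H_n\ge \alpha\log n)\le \sum_{k\ge \alpha\log n}\E[Z_k((1+\epsilon)\log n)],
\]
which is of order $n^{-\delta}$ for some $\delta>0$ when $\epsilon$ is small. Summability along $n=2^m$ combined with the monotonicity of $H_n$ lets Borel--Cantelli finish the argument.

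The matching lower bound is the main obstacle. A plain second moment on $Z_k(t)$ gives only convergence in probability, since the variance of $Z_k$ is comparable to $\E[Z_k]^2$. The clean route is via branching-random-walk theory: the process is a continuous-time BRW with log-Laplace $\kappa(\theta)=e^\theta$, and the variational formula $v^\star=\inf_{\theta>0}\kappa(\theta)/\theta$ for the speed of the rightmost particle is solved at $\theta^\star=1$, giving $v^\star=e$. Biggins' almost-sure convergence theorem for the maximum of a BRW then yields $\liminf H_n/\log n\ge e$ almost surely. An alternative I would try first, following Pittel, is a spine/size-biased construction in which a distinguished line of descent grows at rate $e$, combined with a careful Borel--Cantelli argument on the spinal depth.
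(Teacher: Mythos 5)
You should first know that the paper does not prove this statement at all: it is imported verbatim from Pittel~\cite{pittel1994note} and used as a black box to bound the diameter of the genealogical threshold model, so there is no in-paper argument to compare yours against. Judged on its own, your sketch is correct, and it is in fact essentially the proof that exists in the literature (Pittel's note itself goes through a continuous-time branching-process embedding and the first-birth problem). The Yule embedding is right: with independent rate-one exponential clocks the genealogy of the first $n$ individuals is a URRT by memorylessness, $T_n=\sum_{k=1}^{n-1}E_k/k$ satisfies $T_n/\log n\to 1$ almost surely (the variances $\sum_k k^{-2}$ converge, so even $T_n-\log n$ converges a.s.), and $H_n=M(T_n)$ where $M(t)$ is the maximal depth at time $t$. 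Your first-moment computation $\E Z_k(t)=t^k/k!$ and the rate $\alpha(1-\log\alpha)$ correctly pin down $e$, and the upper bound via Markov plus Borel--Cantelli along $n=2^m$ with monotonicity of $H_n$ is sound; note that the tail sum over $k\ge\alpha\log n$ is dominated by a geometric series since $t/k<1$ in that range, and for the conditioning step it suffices that $T_n\le(1+\epsilon)\log n$ holds eventually almost surely. For the lower bound you correctly diagnose that a bare second moment fails at the critical depth (the variance of $Z_k$ is of the order of the squared mean, so one only gets convergence in probability at best), and the appeal to Hammersley--Kingman--Biggins is the standard fix: here $\E\sum_i e^{\theta\,d_i(t)}=\sum_k e^{\theta k}t^k/k!=\exp\left(te^{\theta}\right)$, so indeed $\kappa(\theta)=e^{\theta}$, the infimum of $\kappa(\theta)/\theta$ is attained at the interior point $\theta^\star=1$ (where $\theta^\star\kappa'(\theta^\star)=\kappa(\theta^\star)=e$), and the lattice, deterministic displacement is covered by these theorems, giving $M(t)/t\to e$ almost surely; composing with $T_n/\log n\to 1$ then delivers both bounds at once. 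So nothing essential is missing beyond the routine verification of hypotheses you already name; your proposal is a legitimate, and indeed the canonical, proof of the cited result.
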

It remains to note that our graph contains the underlying genealogical tree by the definition, and this tree is URRT, so we obtain the following result.

\begin{theorem}
For a graph $G_n$ constructed according to~\eqref{eq:threshold_distance} we have, with probability one,
$$
\lim_{n\to \infty}\frac{\mathbf{diam}(G_n)}{\log n} \le 2 e\,.
$$
\end{theorem}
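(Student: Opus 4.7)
The plan is quite direct, because the genealogical threshold model~\eqref{eq:threshold_distance} places an edge whenever $d_{tree}(v_i,v_j)=1$. In particular, every edge of the genealogical tree $T_n$ is present in $G_n$, so $T_n$ is a spanning subgraph of $G_n$. Therefore $\mathbf{diam}(G_n) \le \mathbf{diam}(T_n)$, and it suffices to control the diameter of a URRT.

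Next, I would pass from height to diameter. The tree $T_n$ is rooted at $v_1$, and by definition its height $H_n$ is the maximum distance (in $T_n$) from $v_1$ to any other vertex. For any two vertices $v_i, v_j$, concatenating the unique paths from $v_i$ to $v_1$ and from $v_1$ to $v_j$ gives a walk in $T_n$ of length at most $H_n + H_n = 2 H_n$, so the distance between $v_i$ and $v_j$ in $T_n$ is at most $2H_n$. Hence $\mathbf{diam}(T_n) \le 2 H_n$.

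Combining the two bounds yields $\mathbf{diam}(G_n) \le 2 H_n$, and dividing by $\log n$ and invoking the cited theorem of Pittel, which gives $H_n/\log n \to e$ almost surely, we obtain
\[
\limsup_{n \to \infty} \frac{\mathbf{diam}(G_n)}{\log n} \le 2 \lim_{n\to\infty} \frac{H_n}{\log n} = 2e
\]
with probability one, which is the desired conclusion. There is no real obstacle here; the only subtlety worth flagging is the observation that the genealogical tree is automatically contained in the graph (which is a built-in feature of the model~\eqref{eq:threshold_distance}, not something requiring verification), so the entire argument reduces to the trivial ``diameter at most twice the height'' bound applied to a URRT.
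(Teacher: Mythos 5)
Your proof is correct and follows exactly the route the paper intends: the graph contains the genealogical URRT by the definition of the model~\eqref{eq:threshold_distance}, so the diameter is bounded by twice the tree's height, and Pittel's theorem on $H_n/\log n \to e$ gives the constant $2e$. You have merely spelled out the ``diameter at most twice the height'' step that the paper leaves implicit.
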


\section{Conclusion and future work}

In this paper, we introduced a principle called \textit{preferential placement}.
Our method is designed to model a realistic clustering structure in $\mathbb{R}^d$. 
The algorithm is parametrized only by a distribution $\Xi$, and if $\Xi$ is a Pareto distribution, which is a natural choice, then we essentially have only one parameter~--- the exponent $\beta$. 
The proposed algorithm naturally models clusters and the distribution of cluster sizes follows a power law, which is a desirable property. Although preferential placement only generates the coordinates of vertices, which is interesting in its own right, one can easily construct a graph based on the obtained structure using one of the methods discussed in this paper. In particular, we empirically analyzed the threshold model and its genealogical-based variant  and obtained graphs with power-law distributions of community sizes and with degree distributions similar to ones observed in many real-worlds networks.

In this paper, we made only a first step to understanding the cluster formation in complex structures and there are many directions for future research. First of all, more formal analysis of the distribution of cluster sizes would be useful. As we discussed, the main problem here is the lack of any suitable formal definition of clusters. However, one can try, e.g., to analyze clusters produced by one of well-known clustering algorithms. Second, the radius of obtained configuration can be analyzed deeper: e.g., we are going to study the lower bound for $X_{\delta}(n)$ and the upper bound for $X_{max}(n)$ (see Section~\ref{sec:radius}).
Finally, there are many open questions in the analysis of the obtained graph structures: comparison of different models, theoretical analysis of the degree distribution, analysis of the local clustering coefficient, and so on.

\section*{Acknowledgements}

This study was funded by RFBR according to the research project 18-31-00207
and by Russian President grant 
MK-527.2017.1. 

\bibliographystyle{plain}
\bibliography{references_clusters}

\section*{Appendix}

\section*{Proof of Theorem~\ref{th:cluster_sizes}}

First, recall the process of cluster formation:
\begin{itemize}
\item At the beginning of the process we have one vertex which forms one cluster;
\item At step $t+1$ with probability $p(t)$ a new cluster consisting of $v_{t+1}$ is created;
\item With probability $1-p(t)$ the new vertex $v_{t+1}$ joins some already existing cluster $C$ with probability proportional to $|C|$.
\end{itemize}
So, we can write the following equations:
\begin{equation}\label{eq:s=1}
\E(F_{t+1}(1)|S_t) = F_{t}(1) \left(1 - \frac{1-p(t)}{t}\right) + p(t)\,,
\end{equation}
\begin{equation}\label{eq:s>1}
\E(F_{t+1}(s)|S_t) = F_{t}(s) \left(1 - \frac{s(1-p(t))}{t}\right) + F_t(s-1)\frac{(s-1)(1-p(t))}{t}\,, \,\,\, \, s>1\,.
\end{equation}
Now we can take expectations of the both sides of the above equations and analyze the behavior of $\E F_{t}(s)$ inductively.
Recall that the following form of the probability $p(t)$ is assumed: 
$p(t) = \frac{c}{t^{\alpha}}$, where $c>0$, $1\le \alpha \le 1$.


\paragraph{Proof for $\alpha=0.$}

Consider the case $\alpha=0$, i.e., $p(t) = c$, $0<c<1$. Let us prove that in this case 
\begin{equation}\label{eq:1}
\E F_n(s) = \frac{c (s-1)!\,\G\left(2+\frac{1}{1-c}\right)}{(2-c)\G\left(s+1+\frac{1}{1-c}\right)}\left(n + \theta_{n,s}\right)\,.
\end{equation}
where $|\theta_{n,s}| \le C \, s^\frac{1}{1-c}$ for some constant $C>0$.

We prove this result by induction on $s$ and for each $s$ the proof is by induction on $n$. Note that for $n=1$ Equation~\eqref{eq:1} holds for all $s$. Indeed, we can take $\theta_{1,s} = -1$ for $s > 1$ and take $\theta_{1,1}$ satisfying $\frac{c}{2-c}\left(1 + \theta_{1,1}\right) = 1$.

Consider now the case $s = 1$. We want to prove that
$$
\E F_n(1) = \frac{c}{2-c}\left(n + \theta_{n,1}\right)\,.
$$
For the inductive step we use Equation~\eqref{eq:s=1} and get
\begin{multline*}
\E(F_{t+1}(1)) = \E F_{t}(1) \left(1 - \frac{1-c}{t}\right) + c 
= \frac{c}{2-c} \left(t + \theta_{t,1}\right) \left(1 - \frac{1-c}{t}\right) + c \\
= \frac{c}{2-c} \left(t+1 + \theta_{t,1} \left(1 - \frac{1-c}{t}\right) \right)\,.
\end{multline*}
Since
$$
C \left(1 - \frac{1-c}{t} \right) \le C,
$$
this finishes the proof for $\alpha=0$ and $s=1$.

For $s>1$ we use Equation~\eqref{eq:s>1} and get
\begin{multline*}
\E(F_{t+1}(s)) = \E F_{t}(s) \left(1 - \frac{s\left(1-c\right)}{t}\right) + \E F_t(s-1)\frac{(s-1)\left(1-c\right)}{t} \\
=  \frac{c (s-1)!\,\G\left(2+\frac{1}{1-c}\right)\left(t + \theta_{t,s}\right)}{(2-c)\,\G\left(s+1+\frac{1}{1-c}\right)}  \left(1 - \frac{s(1-c)}{t}\right) 
 +  \frac{c (s-1)!\,\G\left(2+\frac{1}{1-c}\right)(1-c)(t + \theta_{t,s-1})}{(2-c)\,\G\left(s+\frac{1}{1-c}\right)t} \\
=\frac{c (s-1)!\,\G\left(2+\frac{1}{1-c}\right)}{(2-c)\,\G\left(s+1+\frac{1}{1-c}\right)} 
\left(t + 1 + \theta_{t,s} \left(1 - \frac{s(1-c)}{t}\right) + \theta_{t,s-1} \frac{s(1-c) + 1}{t} \right)\,.
\end{multline*}
To finish the proof we need to show that
\begin{equation}\label{eq:c}
(s-1)^{\frac{1}{1-c}} \frac{s(1-c) + 1}{t}  \le s^{\frac{1}{1-c}}\frac{s(1-c)}{t}\,.
\end{equation}
It is easy to show that the above inequality holds. Indeed,~\eqref{eq:c} can be rewritten as
$$
\left(1 - \frac{1}{s-1}\right)^{\frac{1}{1-c}} \ge
  1 + \frac{1}{s(1-c)} \,.
$$
which holds as $\left(1 - \frac{1}{s-1}\right)^{\frac{1}{1-c}} \ge 1 - \frac{1}{(s-1)(1-c)} > 1 - \frac{1}{s(1-c)}$.


\paragraph{Proof for $0<\alpha \le 1$.}

Now we consider the case $p(t) = ct^{-\alpha}$ for $0< \alpha \le 1$. Let us prove that in this case 
$$
\E F_n(s) = \frac{c (s-1)! \, \G(3-\alpha)}{(2-\alpha)\G(s+2-\alpha)} \left(n^{1-\alpha} + \theta_{n,s}\right)\,,
$$
where $|\theta_{n,s}| \le C n^{\max\{0,1-2\alpha\}}s^{1-\alpha+\epsilon}$
for some constant $C>0$ and for any $\epsilon>0$.

The proof is similar to the case $\alpha = 0$.
Again, for $n=1$ the theorem holds, as we can take $\theta_{1,s} = -1$ for $s > 1$ and take $\theta_{1,1}$ satisfying $\frac{c}{2-\alpha}\left(1 + \theta_{1,1}\right) = 1$.

Consider the case $s = 1$. We want to prove that
$$
\E F_n(1) = \frac{c}{2-\alpha}  \left(n^{1-\alpha} + \theta_{n,1}\right).
$$
Inductive step in this case becomes
\begin{multline*}
\E(F_{t+1}(1)) = \E F_{t}(1) \left(1 - \frac{1-ct^{-\alpha}}{t}\right) + ct^{-\alpha}  
= \frac{c}{2-\alpha} \left(t^{1-\alpha} + \theta_{t,1}\right) \left(1 - \frac{1-ct^{-\alpha}}{t}\right) + ct^{-\alpha} \\
= \frac{c}{2-\alpha} \left( t^{1-\alpha} - t^{-\alpha} + c\,t^{-2\alpha} + (2-\alpha)t^{-\alpha} +\theta_{t,1} \left(1 - \frac{1-ct^{-\alpha}}{t}\right) \right) \\
= \frac{c}{2-\alpha} \left( (t+1)^{1-\alpha} + O\left(t^{-\alpha-1}\right) + c\,t^{-2\alpha} +\theta_{t,1} \left(1 - \frac{1-ct^{-\alpha}}{t}\right) \right)\,.
\end{multline*}
The last equation holds since $(t+1)^{1-\alpha} = t^{1-\alpha} + (1-\alpha)t^{-\alpha} + O\left(t^{-\alpha-1}\right)$.
In order to finish the proof for the case $s=1$ it is sufficient to show that
$$
O\left(t^{-\alpha-1}\right) + c\, t^{-2\alpha}  \le C t^{\max\{0,1-2\alpha\}} \frac{1-ct^{-\alpha}}{t}\,,
$$
which holds for sufficiently large $C$, since
$$
C t^{\max\{0,1-2\alpha\}} \frac{1-ct^{-\alpha}}{t} \ge C t^{-2\alpha} + O\left(C t^{-3\alpha} \right) \ge   c\, t^{-2\alpha} + O\left(t^{-\alpha-1}\right) 
$$
for all $\alpha$, $0 < \alpha \le 1$, and for sufficiently large $C$.

For $s>1$ we have:
\begin{multline*}
\E(F_{t+1}(s)) = \E F_{t}(s) \left(1 - \frac{s\left(1-ct^{-\alpha}\right)}{t}\right) + \E F_t(s-1)\frac{(s-1)\left(1-ct^{-\alpha}\right)}{t} \\
= \frac{c (s-1)!\, \G(3-\alpha)}{(2-\alpha)\G(s+2-\alpha)} \left(t^{1-\alpha} + \theta_{t,s}\right)\left(1 - \frac{s\left(1-ct^{-\alpha}\right)}{t}\right) \\ + \frac{c (s-2)!\, \G(3-\alpha)}{(2-\alpha)\G(s+1-\alpha)} \left(t^{1-\alpha} + \theta_{t,s-1}\right) \frac{(s-1)\left(1-ct^{-\alpha}\right)}{t} \\
= \frac{c (s-1)!\, \G(3-\alpha)}{(2-\alpha)\G(s+2-\alpha)} \left( t^{1-\alpha}- s\left(1-ct^{-\alpha}\right)t^{-\alpha} + \theta_{t,s}\left(1 - \frac{s\left(1-ct^{-\alpha}\right)}{t}\right) \right. \\ \left. +(s+1-\alpha)\left(1-ct^{-\alpha}\right)t^{-\alpha} + \theta_{t,s-1} \frac{(s+1-\alpha)\left(1-ct^{-\alpha}\right)}{t} \right)\\
= \frac{c (s-1)!\, \G(3-\alpha)}{(2-\alpha)\G(s+2-\alpha)} \bigg( (t+1)^{1-\alpha} + O\left(t^{-\alpha-1} \right) - c (1 - \alpha) t^{-2\alpha} \\ \left.  + \theta_{t,s}\left(1 - \frac{s\left(1-ct^{-\alpha}\right)}{t}\right) +   \theta_{t,s-1} \frac{(s+1-\alpha)\left(1-ct^{-\alpha}\right)}{t}  \right)\,.
\end{multline*}

Recall the condition for $\theta_{n,s}$: $|\theta_{n,s}| \le C n^{\max\{0,1-2\alpha\}}s^{1-\alpha+\epsilon}$. Therefore, in order to finish the proof, 
we need to show that for some $C$ the sum of the error terms above can be bounded by $C (t+1)^{\max\{0,1-2\alpha\}}s^{1-\alpha+\epsilon}$, i.e.,
\begin{multline*}
O\left(t^{-\alpha-1} \right) - c (1 - \alpha) t^{-2\alpha}  + C t^{\max\{0,1-2\alpha\}}s^{1-\alpha+\epsilon}\left(1 - \frac{s\left(1-ct^{-\alpha}\right)}{t}\right) \\ +   C t^{\max\{0,1-2\alpha\}}(s-1)^{1-\alpha+\epsilon} \frac{(s+1-\alpha)\left(1-ct^{-\alpha}\right)}{t} \le
C (t+1)^{\max\{0,1-2\alpha\}}s^{1-\alpha+\epsilon}\,.
\end{multline*}
Since $(t+1)^{\max\{0,1-2\alpha\}} \ge t^{\max\{0,1-2\alpha\}}$, it is sufficient to prove that 
\begin{multline*}
O\left(t^{-\alpha-1} \right) + c (1 - \alpha) t^{-2\alpha} +   C\, t^{\max\{0,1-2\alpha\}}(s-1)^{1-\alpha+\epsilon} \frac{(s+1-\alpha)\left(1-ct^{-\alpha}\right)}{t} \\ \le C t^{\max\{0,1-2\alpha\}}s^{1-\alpha+\epsilon} \frac{s\left(1-ct^{-\alpha}\right)}{t} \,,
\end{multline*}
which is equivalent to
$$
O\left(t^{-\alpha} \right) + \frac{t^{1-2\alpha} c (1 - \alpha)}{1-ct^{-\alpha}} 
\le C t^{\max\{0,1-2\alpha\}}\left(s^{2-\alpha+\epsilon} - (s+1-\alpha)(s-1)^{1-\alpha+\epsilon} \right) \,.
$$
By using the inequality $(s-1)^{1-\alpha+\epsilon} \le s^{1-\alpha+\epsilon} - (1-\alpha+\epsilon)s^{-\alpha+\epsilon}$, we get
$$
O\left(t^{-\alpha} \right) + \frac{t^{1-2\alpha} c (1 - \alpha)}{1-ct^{-\alpha}} 
\le C t^{\max\{0,1-2\alpha\}} s^{1-\alpha+\epsilon}\left(\epsilon + \frac{(1-\alpha)(1-\alpha+\epsilon)}{s} \right) \,.
$$
Note that the inequality 
$$
O\left(t^{-\alpha} \right) + O\left(t^{1-2\alpha} \right)
\le C t^{\max\{0,1-2\alpha\}} s^{1-\alpha+\epsilon}\epsilon
$$
holds for sufficiently large $C$, which completes the proof of the theorem. 

\section*{Proof of Lemma~\ref{lem:moments}}

Let us first prove that for any vectors $\x,\y\in \mathbb{R}^d$ and for any $\delta$, $1 \le \delta \le 2$, we have:
\begin{equation}\label{eq:vectors}
\frac 1 2 \left(\|\x+\y\|^\delta + \|\x-\y\|^\delta\right) \le \|\x\|^\delta + \|\y\|^\delta\,. 
\end{equation}
If $\|\x\|=0$ and $\|\y\|=0$, then this inequality holds. Otherwise, without loss of generality, we can assume that $\|\x\|^2 + \|\y\|^2 = 1$ (if $\|\x\|^2 + \|\y\|^2 = a > 0$, we divide $\x$ and $\y$ by $\sqrt{a}$).

In order to prove~\eqref{eq:vectors}, it is sufficient to show that  $\frac 1 2 \left(\|\x+\y\|^\delta + \|\x-\y\|^\delta\right) \le 1$ and $1 \le \|\x\|^\delta + \|\y\|^\delta$. The second inequality is trivial:
$$
\|\x\|^\delta + \|\y\|^\delta \ge \|\x\|^2 + \|\y\|^2 = 1
$$
since $\|\x\| \le 1, \|\y\| \le 1$, and $\delta \le 2$.

It remains to show that $\frac 1 2 \left(\|\x+\y\|^\delta + \|\x-\y\|^\delta\right) \le 1$. First let us note that 
$$
\|\x+\y\|^2 + \|\x-\y\|^2 = 2(\|\x\|^2 + \|\y\|^2) = 2.
$$ Now we introduce the following variables 
$$
u = \frac{\|\x+\y\|^2}{2},\,\, v = \frac{\|\x-\y\|^2}{2}, \,\,\,
 u+v = 1
$$ 
and it remains to show that $2^{\delta/2-1}\left(u^{\delta/2}+v^{\delta/2}\right) \le 1$, which is true since the maximum of $u^{\delta/2}+v^{\delta/2} = u^{\delta/2}+(1-u)^{\delta/2}$ for $0 \le u \le 1$ is equal to $2^{-\delta/2+1}$ at $u = \frac 1 2$.

Now we note that the distributions of $\bxi + \boldeta$ and $\bxi - \boldeta$ are identical, so 
$$
\E\|\bxi + \boldeta\|^\delta = \frac{1}{2} \left(\E\|\bxi + \boldeta\|^\delta + \E\|\bxi - \boldeta\|^\delta\right)
$$
and now the lemma follows from Equation~\eqref{eq:vectors}.

\end{document}